\documentclass[11pt]{article}

\usepackage[margin=1in]{geometry}
\usepackage{amsmath,amssymb,amsthm}
\usepackage[colorlinks=true,linkcolor=blue,citecolor=blue,urlcolor=blue]{hyperref}

\newcommand{\E}{\mathbb{E}}
\newcommand{\eps}{\varepsilon}
\newcommand{\adeg}{\widetilde{\deg}}      
\newcommand{\Ot}{\widetilde{O}}           
\newcommand{\Omt}{\widetilde{\Omega}}     
\newcommand{\Tht}{\widetilde{\Theta}}     
\newcommand{\bn}{\{0,1\}}
\newcommand{\str}{\{0,1,*\}}
\newcommand{\one}{\mathbf{1}}
\newcommand{\C}{\operatorname{C}}
\newcommand{\Cb}[1]{\C_{\overline{#1}}}    
\newcommand{\UC}{\mathrm{UC}}
\DeclareMathOperator{\polylog}{polylog}
\DeclareMathOperator{\AND}{AND}
\DeclareMathOperator{\OR}{OR}
\DeclareMathOperator{\Qq}{Q}
\newcommand{\mem}{\operatorname{mem}}
\newcommand{\s}{\operatorname{s}}
\newcommand{\bs}{\operatorname{bs}}

\theoremstyle{plain}
\newtheorem{theorem}{Theorem}[section]
\newtheorem{lemma}[theorem]{Lemma}
\newtheorem{corollary}[theorem]{Corollary}
\newtheorem{proposition}[theorem]{Proposition}

\theoremstyle{definition}
\newtheorem{definition}[theorem]{Definition}
\theoremstyle{remark}
\newtheorem{remark}[theorem]{Remark}

\title{An Optimal Separation Between Certificate Complexity\\
and Approximate Degree}
\author{Kaspars Balodis\thanks{Center for Quantum Computer Science, Faculty of Science and
  Technology, University of Latvia. \texttt{kaspars.balodis2@lu.lv}}}
\date{August 31, 2026}

\begin{document}
\maketitle

\begin{abstract}
We prove that certificate complexity can be quartically larger than approximate degree.  More
precisely, we construct a family of total Boolean functions $G$ with
\[
   \C(G)\;=\;\Omt\bigl(\adeg(G)^{4}\bigr),
\]
where $\C$ denotes certificate complexity and $\adeg$ denotes $1/3$-approximate degree.
This is optimal up to polylogarithmic factors, since every total Boolean function $f$ satisfies
$\C(f)\le O(\adeg(f)^4)$ by the classical block-sensitivity bounds of Nisan and
Nisan--Szegedy.
Thus the result closes the gap between these two measures and improves the
previously best known separation
$\C(f)=\Omt(\adeg(f)^3)$ by Balodis, Ben-David, G\"o\"os, Jain, and Kothari.

The construction starts from the partial function they used to quadratically separate
$0$-certificate complexity from unambiguous $1$-certificate complexity.
It already has the required certificate hardness, but its $0$-certificates
are unstructured, which blocks the derivation of a low-degree verifier.
We keep its $1$-condition and restrict the $0$-inputs to those certified by
a structured family whose validity admits a low-degree approximant,
while preserving the quadratic hardness.
The partial function with its low-degree verifier is then fed through the cheat-sheet framework
to yield the total function $G$ with the claimed separation.

The main technical ingredient is an approximate polynomial that verifies
the certificate in degree $\Ot(\sqrt n)$.
The verifier forms a low-degree count $W$ of the candidate
$1$-certificates that remain compatible with the asserted $0$-certificate,
and tests whether this count is zero.
Crucially, the construction ensures that $W$ never exceeds $\Ot(n)$,
instead of the $\Theta(n^2)$ candidate pairs it counts.
Since the degree of a zero-test is governed by the range of the count rather than by
its number of summands, this brings the verification down to degree $\Ot(\sqrt n)$.

\end{abstract}

\section{Introduction}\label{sec:intro}

A recurring theme in the study of Boolean functions is that the many natural ways of measuring
their complexity---decision tree depth $D$, certificate complexity $\C$, (exact) degree $\deg$,
approximate degree $\adeg$, sensitivity and block sensitivity, and randomized and quantum query
complexity, among others---are all polynomially related for total functions
\cite{Nis91,NS,BBCMW,BdW,Huang}.  A central program in the area is to determine the \emph{exact}
polynomial relationships, that is, the best possible exponents relating each pair of measures.
Many of these exponents are now known, while others remain stubbornly open.

This paper concerns two of these measures.  The \emph{certificate complexity} $\C(f)$ is the
maximum, over inputs $x$, of the minimum number of coordinates of $x$ whose values force
$f(x)$.  Its one-sided variants $\C_1$ and $\C_0$ are the nondeterministic and conondeterministic
query measures, and $\C(f)\le D(f)$.  The \emph{approximate degree} $\adeg(f)$ is the least degree of a real
polynomial that approximates $f$ pointwise to error $1/3$.  Approximate degree is one of the
most useful lower-bound techniques in the area---most prominently $\adeg(f)\le 2\,\Qq(f)$ for
bounded-error quantum query complexity \cite{BBCMW}---and a central object of study in its own
right.

The best known bounds relating each pair of measures are collected in a now-standard table of
separations, assembled by Aaronson, Ben-David, and Kothari \cite{ABK}
and Aaronson et al.\ \cite{ABKRT} and reproduced in updated form by Iyer et al.~\cite{Iyer}.
The cell of that table relating $\C$ and $\adeg$ was one of those whose exponent had not been
pinned down: before the present and concurrent work, it recorded a cubic lower bound against a
quartic upper bound.

\subsection{Background}\label{sec:background}

\paragraph{The quartic ceiling.}
It has been known since the 1990s that for a total Boolean function the gap
between certificate complexity and approximate degree is at most quartic.
Nisan proved $\C(f)\le \s(f)\,\bs(f)$
\cite{Nis91} (see also \cite{BdW}), where $\s$ is sensitivity and $\bs$ is block sensitivity;
since $\s(f)\le\bs(f)$ this gives $\C(f)\le\bs(f)^2$, and Nisan and Szegedy proved
$\bs(f)\le 6\,\adeg(f)^2$ \cite{NS}.  Therefore
\begin{equation}\label{eq:ceiling}
  \C(f)\ \le\ O\bigl(\adeg(f)^4\bigr),
\end{equation}
so any separation $\C(f)\ge\adeg(f)^{c-o(1)}$ has exponent at most $c=4$.

\paragraph{The puzzle framework and the cubic record.}
The $n$-bit $\AND$ function shows a quadratic separation between certificate complexity and
approximate degree---its certificate complexity is $n$ and its approximate degree is
$\Theta(\sqrt n)$.
The first example of a total Boolean function showing a better-than-quadratic separation
was given by Ben-David, G\"o\"os, Jain, and Kothari~\cite{BGJK},
who recast the task of separating $\C$ from other measures as three equivalent
combinatorial \emph{puzzles} and proved \cite[Thm.~2]{BGJK} that a solution of exponent $\alpha$
to any one of them yields solutions of the same exponent to all three, up to logarithmic factors.

The form we use is the following.
\begin{quote}
\textbf{Puzzle II \cite[\S1]{BGJK}.}
For $\alpha>1$, does there exist a partial function $f$ together with an
$x\in f^{-1}(*)$ such that both $\Cb{0}(f,x)$ and $\Cb{1}(f,x)$ are at least
$\C(f)^{\alpha-o(1)}$?
\end{quote}
Here $\Cb{b}(f,x)$ is the least number of bits of $x$ that must be fixed to rule out
output $b$; the formal definition is given in Section~\ref{sec:prelim}.
An exponent-$\alpha$ solution totalizes to a function $g$ with
$\C_0(g)\ge\UC_1(g)^{\alpha-o(1)}$, where $\UC_1$ is unambiguous $1$-certificate complexity;
this combines the transformation ``$\mathrm{II}\Rightarrow\mathrm{I}$'' \cite[\S5.1]{BBGJK} with
the cheat-sheet framework of Aaronson, Ben-David, and Kothari \cite{ABK}, and is recorded below
as Lemma~\ref{lem:totalization}.  If, in addition, certificate validity admits a low-degree
verifier, the cheat-sheet degree principle (\cite[\S6.1]{BBGJK}, \cite[Lem.~19]{ABK}) supplies
the second bound
{\renewcommand{\theHequation}{equation.star}%
\begin{equation}\label{eq:star}
  \adeg(g)\ \le\ \Ot\!\bigl(\sqrt{\UC_1(g)}\bigr) ,
  \tag{$\star$}
\end{equation}}%
and eliminating $\UC_1$ between the two gives $\C(g)\ge\adeg(g)^{2\alpha-o(1)}$.
For a partial function inspired by the board game Hex, they established both ingredients
with $\alpha=1.5$, obtaining the cubic record $\C(g)=\Omt(\adeg(g)^3)$ \cite[Cor.~5]{BGJK}, as
published in \cite{BBGJK}.

\paragraph{The optimal puzzle solution and the missing verifier.}
The exponent $\alpha=2$ is the largest one can hope for: by the equivalence of the puzzles,
$\alpha$ is capped at $2$ by the universal quadratic unambiguous-DNF-to-CNF conversion
\cite[Fact~1, Thm.~2]{BGJK}.  Ben-David et al.\ conjectured that such an optimal solution exists
and anticipated its
consequence: they wrote \cite[\S2.3]{BGJK} that an $\alpha=2$ solution ``would yield near-optimal
lower bounds'' for the separations above---in particular it would push the cubic $\C$-vs-$\adeg$
bound to the quartic ceiling \eqref{eq:ceiling}.

Balodis \cite{Bal} constructed such an $\alpha=2$ solution: a partial function $f$
with a $*$-input $x$ at which
$\Cb{0}(f,x),\Cb{1}(f,x)\ge \C(f)^{2-o(1)}$.
This improved a multitude of separations---among them a quadratic $\C$-vs-$\deg$
separation, a cubic $\C$-vs-$\s$ separation, and
optimal (up to logarithmic factors) solutions to the Alon--Saks--Seymour problem
and the clique-vs-independent-set communication problem.
The two works were combined into the joint paper of Balodis,
Ben-David, G\"o\"os, Jain, and Kothari \cite{BBGJK}.
The one predicted improvement that did \emph{not} follow was precisely the $\C$-vs-$\adeg$
separation, which remained cubic.  The obstruction lies in the cheat-sheet step: the degree
bound \eqref{eq:star} is not automatic, and its derivation requires the validity of a
certificate to be approximable in low degree.  The structured Hex certificates admit such a
verifier, but no degree-$\Ot(\sqrt n)$ verifier was known for the $0$-certificates of the optimal
$\alpha=2$ construction.  Accordingly, the joint paper calls the approximate-degree corollary
``the trickiest,'' says ``we do not know how to derive it from our quadratic solution,'' and
instead uses the ``more structured'' $\alpha=1.5$ Hex construction
\cite[\S2.2]{BBGJK}.  Removing this obstruction---and thereby realizing the quartic separation
that the $\alpha=2$ hardness was expected to give---is the contribution of this paper.

\subsection{Our result}\label{sec:result}

\begin{theorem}\label{thm:main}
For infinitely many $n$ there is a total Boolean function
$G_n:\bn^{N_n}\to\bn$, where $N_n=\Tht(n^3)$, such that
\[
  \C(G_n)\ =\ \Omega(n^2)
  \qquad\text{and}\qquad
  \adeg(G_n)\ =\ \Tht(\sqrt n) ,
\]
and hence
\[
  \C(G_n)\ =\ \Omt\bigl(\adeg(G_n)^{4}\bigr).
\]
\end{theorem}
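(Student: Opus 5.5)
The plan follows the route laid out in the abstract and in \S\ref{sec:background}: build a partial function $f$ solving Puzzle II with exponent $\alpha=2$ and admitting a low-degree certificate verifier, then pass it through the totalization Lemma~\ref{lem:totalization} and the cheat-sheet degree principle. Concretely, I start from the partial function of \cite{Bal}/\cite{BBGJK} on $\Theta(n^2)$-ish input bits (so that after cheat-sheet padding $N_n=\Tht(n^3)$). I keep its $1$-inputs, whose unambiguous $1$-certificates have size $\Ot(n)$. I then restrict the $0$-inputs to those admitting a \emph{structured} $0$-certificate of size $\Ot(n)$: a certificate that fixes a designated family of bits and thereby rules out every candidate $1$-certificate, each candidate being blocked by an explicitly indicated conflicting position. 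The first step is to verify that the Puzzle II hardness survives this restriction. At the original $*$-input $x$, one must still fix $\Omega(n^2)$ bits to rule out output $0$ and also to rule out output $1$. For $\Cb{1}$ this is unchanged, since the $1$-condition is unchanged. For $\Cb{0}$ it only gets easier, since shrinking $f^{-1}(0)$ can only increase the cost of ruling out $1$ in the relevant direction. I would recheck the original adversary argument to confirm it uses only structured $0$-inputs, so that $\C(f)=\Ot(n)$ while both $\Cb{b}(f,x)=\Omega(n^2)$.

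Next, Lemma~\ref{lem:totalization} gives a total $G_n$ with $\C_0(G_n)=\Omega(n^2)$ and $\UC_1(G_n)=\Ot(n)$. The lower bound $\adeg(G_n)=\Omt(\sqrt n)$ should be easy, e.g.\ by embedding an $\OR$ or $\AND$ of $\Omega(n)$ bits in $G_n$. For \eqref{eq:star}, I use the cheat-sheet degree principle \cite[Lem.~19]{ABK}. This reduces the task to exhibiting, for each candidate certificate written in a cheat sheet, a polynomial of degree $\Ot(\sqrt n)$ that approximates (with error $1/\mathrm{poly}(n)$, so that a union over cheat-sheet cells is affordable) the predicate ``the claimed certificate is consistent with the input and valid.'' Consistency with an $\Ot(n)$-bit certificate is an $\AND$ of $\Ot(n)$ bits, which has degree $\Ot(\sqrt n)$ by the standard amplified approximation of $\AND$. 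The $1$-certificates are structured as in the original construction and are handled this way.

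The main obstacle is verifying a $0$-certificate. Validity says that no candidate $1$-certificate among $\Theta(n^2)$ pairs remains compatible, and a naive $\AND$ over $n^2$ conditions costs degree $n$. My plan is to form the count $W=\sum_{\text{pairs}} [\text{pair compatible}]$, where each indicator is a low-degree (constant or $\polylog$) product of certificate bits and input bits. Because the restriction to structured certificates forces, on every input, at most $\Ot(n)$ pairs to be simultaneously compatible, $W$ takes integer values in $[0,\Ot(n)]$. Zero-testing an integer in $[0,m]$ requires degree only $O(\sqrt m\log(1/\eps))$ via a Chebyshev-based polynomial, so composing gives degree $\Ot(\sqrt n)\cdot\polylog$. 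The delicate point is designing the structured family so that $W\le\Ot(n)$ holds for \emph{all} inputs, including inconsistent or adversarial ones, not just for genuine $0$-inputs. If it fails in general, I would first try to multiply by a low-degree approximate range check, or cap each row's contribution through a local degree-$\Ot(\sqrt n)$ $\OR$, so that $W$ is bounded by construction. Combining everything gives $\adeg(G_n)=\Ot(\sqrt n)$ and $\C(G_n)\ge\C_0(G_n)=\Omega(n^2)$, which yields the quartic separation.
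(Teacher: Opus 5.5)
Your architecture matches the paper's: keep the $1$-side of \cite{Bal}, restrict the $0$-side to a structured descriptor family, merge all checks into one nonnegative integer count, zero-test it in degree $O(\sqrt{\mathrm{range}})$, and totalize via Lemma~\ref{lem:totalization}. The gap is that the step carrying the whole result is the one you leave open as ``the delicate point.'' You never exhibit a family for which the count is $\Ot(n)$ on \emph{every} Boolean input, including garbage descriptors; you only assert that structure forces it. Your fallbacks do not repair this. If the count can reach $\Theta(n^2)$, any univariate post-processing that zero-tests it accurately on that range has degree $\Omega(n)$, by the Nisan--Szegedy/Paturi bound behind Lemma~\ref{lem:and}. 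A range check added afterwards would itself have to threshold the same large-range count. Capping each row by an approximate $\OR$ of degree $\Ot(\sqrt n)$ and then applying an outer zero-test of degree $\Theta(\sqrt n)$ gives $\Ot(n)$, no better than generic $\AND$-composition.

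The bound must come from the combinatorics of the family. The paper gets it from four ingredients absent from your sketch:
(i) a kill column $j$: every row outside a declared set must carry a $(0,0)$ entry in column $j$, so only declared rows can occur in a clean matching pair or among its associated rows;
(ii) the declared set is a row list $A$ of length $M<n$, read through binary-search membership $\mem_A$, an exact decision tree of depth $O(\log^2 n)$ that accepts at most $M$ values \emph{whatever} the list contains (Proposition~\ref{prop:membership});
(iii) the surviving-pairs Lemma~\ref{lem:sparsity}, which bounds by $L=n\log n$ the number of ordered pairs whose rows and $\ell$ associated rows all lie in a given set of size at most $M$;
(iv) spoiling zeros $Z_{u,v}=(1-x_{u,v,1})(1-x_{v,v,2})$ at positions determined by $(u,v)$, so the descriptor carries no pair list an adversary could corrupt.
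Then $W=\sum_{u\ne v}E_A(u,v)(1-Z_{u,v})$ has degree $O(\log^3 n)$ and lies in $\{0,\dots,L\}$ on the whole cube, since $E_A(u,v)=1$ only on surviving pairs of $S_A$ and always $|S_A|\le M$.

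The hardness step is also left open, and your stated reason for it is off. Shrinking $f^{-1}(0)$ does not change the cost of ruling out $1$, and it can only \emph{lower} the cost of ruling out $0$. So $\Cb{0}(f,z)=\Omega(n^2)$ has to be re-proved for the specific family, which cannot be done before the family is fixed. In the paper this works because the spoiling positions are chosen to fit the diagonal hard input. Since $z_{u,v,1}=0$ for $u\ne v$ and $z_{v,v,2}=0$, any $\rho$ consistent with $z$ either leaves each spoiling bit free or fixes it to $0$. A column in which $\rho$ fixes fewer than $M$ bits then serves as the kill column, with $A$ the sorted list of the rows touched there. This is Lemma~\ref{lem:B}, giving $\Cb{0}(f,z)\ge nM$. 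A minor point: the lower bound $\adeg(G_n)=\Omega(\sqrt n)$ needs no embedding argument. It follows at once from $\C(G_n)=\Omega(n^2)$ and the universal bound \eqref{eq:ceiling}.
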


\noindent
By \eqref{eq:ceiling}, exponent $4$ cannot be exceeded, so Theorem~\ref{thm:main} is optimal up
to polylogarithmic factors.

\subsection{Proof overview}\label{sec:overview}

\paragraph{A verifiable quadratic puzzle solution.}
We start from the exponent-$2$ partial function of \cite{Bal}.  It already has the required
hardness: its certificate complexity is $\Ot(n)$, while both co-certificate complexities at a
suitable $*$-input are $\Omega(n^2)$.  Its $0$-side, however, is defined only through the
existence of some short but otherwise arbitrary partial assignment, with no structure that makes
validity easy to approximate.  We retain the condition defining the $1$-inputs but restrict the $0$-inputs to those
admitting a certificate from an explicit, structured family.  The first part of the proof shows
that this restriction preserves the exponent-$2$ co-certificate hardness and that the resulting
certificates still have width $\Ot(n)$.

\paragraph{Counting instead of conjoining.}
The main technical task is to approximate the predicate that decides whether a proposed
$0$-certificate is valid.  A direct verifier amounts to a conjunction of $\Theta(n^2)$ checks,
for which generic $\AND$-composition gives only a degree-$\Ot(n)$ bound.  Instead we aggregate the
pairwise checks into one nonnegative integer-valued polynomial $W$: it counts the candidate
$1$-certificates that remain compatible with the proposed $0$-certificate.  Although $W$ is a
sum of $\Theta(n^2)$ terms, it has degree only $\polylog(n)$, since each term is low-degree and
summation does not increase degree.  More importantly, the combinatorial sparsity built into the
construction ensures that
\[
  0\ \le\ W\ \le\ \Ot(n)
\]
on every Boolean input, even when the proposed certificate is malformed.  We combine $W$ with
the remaining $O(n)$ local validity checks into a single deficiency polynomial $T$, still of
degree $\polylog(n)$ and range $\Ot(n)$, such that the certificate is valid exactly when $T=0$.
A univariate zero-test of degree $O(\sqrt{\operatorname{range}(T)})$ then gives a
degree-$\Ot(\sqrt n)$ approximant for certificate validity.  The point is that the degree is
governed by the \emph{range} of the count, not by its number of summands.  The corresponding
$1$-certificate verifier is simpler and has the same asymptotic degree.

\paragraph{Cheat-sheet totalization.}
Finally we apply the black-box totalization of Lemma~\ref{lem:totalization} to the resulting
partial function and its two low-degree verifiers.  The exponent-$2$ co-certificate hardness
yields a total function $G_n$ with $\C(G_n)=\Omega(n^2)$, while the verifier bounds yield
$\adeg(G_n)=\Ot(\sqrt n)$.  The universal quartic upper bound \eqref{eq:ceiling} supplies the
matching lower bound $\adeg(G_n)=\Omega(\sqrt n)$, completing the proof of
Theorem~\ref{thm:main}.

\subsection{Concurrent work}\label{sec:concurrent}

After completion of the proof, but while this paper was being prepared, we learned that
Pabbaraju~\cite{Pab} had independently obtained the same quartic separation,
$\C(f)=\Omt(\adeg(f)^4)$ \cite[Thm.~4]{Pab}, from a different function.  His starting point is a
new signed set-pair system, which yields unambiguous DNFs of width $O(n)$ with
$\C_0=\Omega(n^2)$ and \emph{no} logarithmic loss; its structure is such that certificate
validity becomes an $\AND$ of only $O(n)$ local checks, so the $\AND$-composition route of
\cite{BGJK,BBGJK} already delivers a degree-$\Ot(\sqrt n)$ verifier.  In other words, that work
removes the bottleneck described above by replacing the hard function with one whose certificates
are $\AND$-friendly, whereas we keep the construction of \cite{Bal} and dispense with the $\AND$.
Beyond the shared quartic separation, \cite{Pab}
obtains a range of further consequences that do not follow from the construction here, including
a constant-gadget lifting theorem, an optimal refutation of the Alon--Saks--Seymour conjecture
and an optimal $\Omega(\log^2 n)$ conondeterministic lower bound for clique-vs-independent-set,
the logarithm-free separations $\C(f)=\Omega(\deg(f)^2)$ and $\C(f)=\Omega(\s(f)^3)$, and a
sample-compression lower bound.

\subsection{Organization}\label{sec:org}

Section~\ref{sec:prelim} fixes notation, collects approximate-degree facts, and states the
black-box totalization used at the end of the paper.  Section~\ref{sec:construction} constructs
the partial function $f$ and proves its quadratic co-certificate hardness.
Section~\ref{sec:certifier} proves the main new ingredient: total witness verifiers
$\varphi_0,\varphi_1$ of approximate degree $\Ot(\sqrt n)$.
Section~\ref{sec:assembly} applies the black box to prove Theorem~\ref{thm:main}.  Proofs of the inherited
surviving-pairs and totalization tools are included in the appendices.

\section{Preliminaries}\label{sec:prelim}

\subsection{Complexity measures}\label{sec:measures}

We write $[m]:=\{1,\dots,m\}$, take all logarithms to base $2$, and use
$\Ot,\Omt,\Tht$ to suppress factors polylogarithmic in the asymptotic parameter.

We use the certificate framework of \cite{BGJK,Bal}.  A (possibly partial) Boolean function is a
map $f:\bn^n\to\str$; an input $x$ is a \emph{$0$-input}, \emph{$1$-input}, or \emph{$*$-input}
according to the value $f(x)$, and we write $f^{-1}(\Sigma):=\{x\in\bn^n:f(x)\in\Sigma\}$ for
$\Sigma\subseteq\str$.  The function is \emph{total} if $f(x)\in\bn$ for every $x$.

\paragraph{Partial assignments.}
A \emph{partial assignment} (or \emph{partial input}) is a string $\rho\in\str^n$.  Its
\emph{fixed coordinates} are the indices $i$ with $\rho_i\ne *$, and its \emph{size} $|\rho|$ is
their number.  A total input $x\in\bn^n$ is \emph{consistent with} $\rho$---equivalently, $\rho$
\emph{agrees with} $x$---if $\rho_i=x_i$ at every fixed coordinate $i$; such an $x$ is a
\emph{completion} of $\rho$.  (Consistency is symmetric: we say interchangeably that $\rho$ is
consistent with $x$ and that $x$ is consistent with $\rho$.)

\paragraph{Certificates.}
Let $\Sigma\subseteq\str$.  A partial assignment $\rho$ is a \emph{$\Sigma$-certificate} for an
input $x$ if $\rho$ is consistent with $x$ and \emph{every} completion $x'$ of $\rho$ satisfies
$f(x')\in\Sigma$.  The \emph{$\Sigma$-certificate complexity} of $x$ and of $f$ are
\[
  \C_\Sigma(f,x)\ :=\ \min\{|\rho|:\rho\text{ is a }\Sigma\text{-certificate for }x\}
  \quad (x\in f^{-1}(\Sigma)),
  \qquad
  \C_\Sigma(f)\ :=\ \max_{x\in f^{-1}(\Sigma)}\C_\Sigma(f,x).
\]
We use the convention $\max\varnothing=0$.
For $b\in\bn$ a \emph{$b$-certificate} is a $\{b\}$-certificate---a partial assignment consistent
with $x$ on which $f$ is constantly $b$ over all completions---and we abbreviate
$\C_b:=\C_{\{b\}}$.  The \emph{certificate complexity} is $\C(f):=\max\{\C_0(f),\C_1(f)\}$.

\paragraph{Co-certificates.}
For a $*$-input $x\in f^{-1}(*)$ we also use the two \emph{co-certificate} complexities
\[
  \Cb{0}(f,x)\ :=\ \C_{\{1,*\}}(f,x),
  \qquad
  \Cb{1}(f,x)\ :=\ \C_{\{0,*\}}(f,x),
\]
following the $\bar 0,\bar 1$ shorthand of \cite{BGJK,Bal} for the output sets $\{1,*\}$ and
$\{0,*\}$.  Thus a $\bar b$-certificate is a partial assignment consistent with $x$ that
\emph{rules out} the value $b$, in that no completion of it takes the value $b$; intuitively
$\Cb{b}(f,x)$ measures how hard it is to prove ``$f(x)\ne b$''.

\paragraph{Unambiguous certificates and degree.}
$\UC_1(f)$ is the least $k$ such that $f^{-1}(1)$ admits a width-$k$ \emph{unambiguous} DNF: a
family of $1$-certificates each of size $\le k$ such that every $1$-input is consistent with
\emph{exactly one} of them.  For $0\le\eps<1/2$, the
\emph{$\eps$-approximate degree} $\adeg_\eps(f)$ is the least
degree of a real polynomial $p$ with $|p(x)-f(x)|\le\eps$ for all $x\in f^{-1}(\bn)$; we set
$\adeg(f):=\adeg_{1/3}(f)$.  Thus no boundedness condition is imposed on $p$ at $*$-inputs.
The \emph{(exact) degree} $\deg(f)$ of a total $f$ is the degree of
its unique multilinear real representation.

\subsection{Standard approximate-degree facts}

The verifier of Section~\ref{sec:certifier} decides whether an integer-valued count vanishes.
We first record a univariate ``zero-test'' of the form needed below: a low-degree polynomial
that equals $1$ at $0$ and is small on $\{1,\dots,L\}$.

\begin{lemma}[Univariate zero-test]\label{lem:zerotest}
For every integer $L\ge1$ there is a univariate real polynomial $q_L$ of degree $O(\sqrt L)$ with
\[
  q_L(0)=1
  \qquad\text{and}\qquad
  |q_L(t)|\le\tfrac13\ \text{ for every integer } t\in\{1,\dots,L\}.
\]
\end{lemma}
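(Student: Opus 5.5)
The plan is to build $q_L$ from a Chebyshev polynomial, using its rapid growth just outside $[-1,1]$. Let $T_d$ denote the degree-$d$ Chebyshev polynomial of the first kind. It satisfies $|T_d(s)|\le1$ for $s\in[-1,1]$, and for $s=1+\delta$ with $0<\delta\le1$ we have
\[
  T_d(1+\delta)\ \ge\ \tfrac12\bigl(1+\delta+\sqrt{2\delta+\delta^2}\bigr)^d\ \ge\ \tfrac12 e^{c\,d\sqrt\delta}
\]
for an absolute constant $c>0$.

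First, map the interval $[1,L]$ affinely onto $[-1,1]$ by
\[
  s(t)\ :=\ \frac{2t-(L+1)}{L-1},
\]
assuming $L\ge2$. When $L=1$ the polynomial $q_1(t)=1-t$ already works. Under this map, $t=0$ goes to
\[
  s(0)\ =\ -\frac{L+1}{L-1}\ =\ -\Bigl(1+\frac{2}{L-1}\Bigr),
\]
which lies at distance $\delta=2/(L-1)$ outside the interval.

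Second, set
\[
  q_L(t)\ :=\ \frac{T_d(s(t))}{T_d(s(0))}.
\]
Then $q_L(0)=1$ by construction. Since $T_d$ has parity $(-1)^d$, we get $|T_d(s(0))|=T_d(1+\delta)$. For every $t\in[1,L]$ (in particular every integer $t$ in that range) we have $|T_d(s(t))|\le1$, and therefore
\[
  |q_L(t)|\ \le\ \frac{1}{T_d(1+\delta)}.
\]

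Third, choose $d$. We need $T_d(1+\delta)\ge3$. By the growth bound above it suffices that $c\,d\sqrt{2/(L-1)}\ge\ln 6$, that is, $d=\lceil C\sqrt L\rceil$ for a suitable absolute constant $C$. The degree of $q_L$ is therefore $O(\sqrt L)$, as the statement requires.

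The only step requiring care is the growth estimate $T_d(1+\delta)\ge\frac12 e^{c d\sqrt\delta}$. It follows from the closed form
\[
  T_d(x)\ =\ \tfrac12\Bigl[\bigl(x+\sqrt{x^2-1}\bigr)^d+\bigl(x-\sqrt{x^2-1}\bigr)^d\Bigr]\qquad(x\ge1),
\]
in which both terms are nonnegative. For $x=1+\delta$ we have $\sqrt{x^2-1}\ge\sqrt{2\delta}$, and $\ln\bigl(1+\sqrt{2\delta}\bigr)\ge\sqrt{2\delta}/2$ when $\delta\le1$. This gives the bound with $c=1/\sqrt2$. Everything else is bookkeeping. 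Alternatively, one could cite the standard Nisan--Szegedy/Paturi bound for the $\OR$ function, since this univariate polynomial is exactly the one underlying $\adeg(\OR_L)=O(\sqrt L)$.
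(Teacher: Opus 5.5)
Your proof is correct and is essentially the paper's argument: the same affine map sending $[1,L]$ onto $[-1,1]$, the same normalization $T_d(s(t))/T_d(s(0))$ with the parity observation, and the same choice $d\approx\ln 6\cdot\sqrt{L-1}$. Your closed form $(x+\sqrt{x^2-1})^d$ plays the role of the paper's $\cosh(d\operatorname{arccosh}x)$. One small boundary slip: you state the growth estimate only for $\delta\le1$, but $L=2$ gives $\delta=2$; this is harmless, since $\ln(1+y)\ge y/2$ still holds at $y=\sqrt{2\delta}=2$ (or you can dispose of small $L$ by interpolation, as the paper does for $L\le4$).
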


\begin{proof}
For $L\le4$ the set $\{1,\dots,L\}$ has at most four points, and Lagrange interpolation gives a
polynomial of degree at most $4$ with $q_L(0)=1$ and
$q_L(t)=0$ for $t\in\{1,\dots,L\}$; so
assume $L\ge5$.  Let $T_d$ be the degree-$d$ Chebyshev polynomial of the first kind, characterized
by $T_d(\cos\theta)=\cos(d\theta)$; thus $|T_d(x)|\le1$ for $|x|\le1$, and for $x\ge1$
\[
  T_d(x)=\cosh\!\bigl(d\,\operatorname{arccosh}x\bigr)\ \ge\ \tfrac12\,e^{\,d\operatorname{arccosh}x}.
\]
The affine map $u(t):=\dfrac{2t-(L+1)}{L-1}$ sends $t=1\mapsto-1$ and $t=L\mapsto1$, so
$u([1,L])=[-1,1]$, while $u(0)=-\bigl(1+\tfrac{2}{L-1}\bigr)$.  Put $\gamma:=\tfrac{2}{L-1}\le\tfrac12$
and
\[
  q_L(t)\ :=\ \frac{T_d\bigl(u(t)\bigr)}{T_d\bigl(u(0)\bigr)},
  \qquad d:=\bigl\lceil\,\ln 6\cdot\sqrt{L-1}\,\bigr\rceil=O(\sqrt L).
\]
Then $q_L(0)=1$ and $\deg q_L=d=O(\sqrt L)$.  For $t\in[1,L]$ we have $u(t)\in[-1,1]$, hence
$|T_d(u(t))|\le1$, and since $T_d$ has a definite parity $|T_d(u(0))|=T_d(1+\gamma)$; thus
\[
  |q_L(t)|\ \le\ \frac{1}{T_d(1+\gamma)}.
\]
Finally, we use the elementary estimate
\[
  \operatorname{arccosh}(1+\gamma)\ge \tfrac12\sqrt{2\gamma}\qquad(0<\gamma\le\tfrac12),
\]
which follows from
$\operatorname{arccosh}(1+\gamma)=\ln\!\bigl(1+\gamma+\sqrt{2\gamma+\gamma^2}\bigr)\ge
\ln\bigl(1+\sqrt{2\gamma}\bigr)\ge\sqrt{2\gamma}-\gamma\ge\tfrac12\sqrt{2\gamma}$
for this range of $\gamma$.  Together with $\tfrac12\sqrt{2\gamma}=1/\sqrt{L-1}$, this gives
$d\cdot\operatorname{arccosh}(1+\gamma)\ge d/\sqrt{L-1}\ge\ln 6$, whence
$T_d(1+\gamma)\ge\tfrac12 e^{\ln 6}=3$.  Therefore $|q_L(t)|\le\tfrac13$ for all $t\in[1,L]$.
\end{proof}

\noindent
We use the zero-test through the following immediate consequence, which makes precise the idea of
testing whether a low-degree integer-valued ``count'' is zero.

\begin{corollary}\label{cor:counting}
Let $W$ be a polynomial in Boolean variables whose multilinear representative has degree $d_W$
and whose Boolean-cube values lie in $\{0,1,\dots,L\}$.  Then $q_L\circ W$ (using the
multilinear representative of $W$) is a polynomial of degree $O(\sqrt L)\cdot d_W$ that takes
values in $[-\tfrac13,1]$ on Boolean inputs and
$\tfrac13$-approximates the predicate $[\,W=0\,]$.
\end{corollary}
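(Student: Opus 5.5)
The plan is to compose the univariate zero-test of Lemma~\ref{lem:zerotest} with the multilinear representative of $W$, and to read off all three claims directly from the lemma. Write $\widetilde W$ for the unique multilinear polynomial agreeing with $W$ on $\bn^m$, of degree $d_W$. Let $q_L(t)=\sum_{j=0}^{d} c_j t^j$ with $d=O(\sqrt L)$. Then $q_L\circ\widetilde W=\sum_j c_j\widetilde W^j$ is a real polynomial in the Boolean variables. Each power $\widetilde W^j$ has degree at most $j\,d_W$, so the composition has degree at most $d\cdot d_W=O(\sqrt L)\cdot d_W$. If one wants a multilinear approximant, reducing via $x_i^2=x_i$ does not increase degree and does not change values on the cube.

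For the values, fix $x\in\bn^m$. By hypothesis $\widetilde W(x)=W(x)\in\{0,1,\dots,L\}$, so $(q_L\circ\widetilde W)(x)=q_L(W(x))$. There are two cases:
\begin{itemize}
\item If $W(x)=0$, then the value is $q_L(0)=1$, which equals $[\,W(x)=0\,]$ exactly.
\item If $W(x)\in\{1,\dots,L\}$, then Lemma~\ref{lem:zerotest} gives $|q_L(W(x))|\le\tfrac13$, so the value lies in $[-\tfrac13,\tfrac13]$. It is therefore within $\tfrac13$ of $[\,W(x)=0\,]=0$.
\end{itemize}
Hence on every Boolean input the value lies in $[-\tfrac13,1]$ and approximates the predicate to error $\tfrac13$.

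The only point needing care is that $q_L$ is controlled only at the integers $0,\dots,L$, not on all reals. This is exactly why the composition must be evaluated only at Boolean points, and why the integrality and range hypotheses on $W$ are essential. I do not expect a genuine obstacle: the statement is an immediate bookkeeping consequence of Lemma~\ref{lem:zerotest}. The main thing to state explicitly is that degree multiplies under composition.
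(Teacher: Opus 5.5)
Your proposal is correct and follows the paper's proof essentially verbatim: you bound the degree by $\deg(q_L)\cdot d_W$, then use that $W$ takes integer values in $\{0,\dots,L\}$ on the cube, so Lemma~\ref{lem:zerotest} gives value exactly $1$ at $W=0$ and absolute value at most $\tfrac13$ otherwise. Your remark that $q_L$ only needs to be controlled at those integer points is the same reasoning the paper relies on.
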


\begin{proof}
The degree is at most $\deg(q_L)\cdot d_W=O(\sqrt L)\,d_W$.  On a Boolean input $W$ takes some
value $t\in\{0,\dots,L\}$; by Lemma~\ref{lem:zerotest}, $q_L(t)=1$ if $t=0$ and
$|q_L(t)|\le\tfrac13$ if $t\ge1$.  Hence $q_L\circ W$ lies in $[-\tfrac13,1]$ and equals the
indicator $[\,W=0\,]$ up to error $\tfrac13$.
\end{proof}

\begin{lemma}[Symmetric approximation; \cite{NS,Pat}]\label{lem:and}
For every integer $m\ge1$,
$\adeg(\AND_m)=\Theta(\sqrt m)$ and $\adeg(\OR_m)=\Theta(\sqrt m)$.
\end{lemma}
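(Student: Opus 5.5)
For the upper bound $\adeg(\AND_m)=O(\sqrt m)$ I will reuse Lemma~\ref{lem:zerotest}. Set $W(x):=m-\sum_{i=1}^m x_i$. This is a degree-$1$ polynomial whose values on $\bn^m$ lie in $\{0,1,\dots,m\}$, and $W(x)=0$ holds exactly when $\AND_m(x)=1$. Corollary~\ref{cor:counting} with $L=m$ and $d_W=1$ then gives the polynomial $q_m\circ W$, which has degree $O(\sqrt m)$ and $\tfrac13$-approximates $[W=0]=\AND_m$. For $\OR_m$ I use the duality $\OR_m(x)=1-\AND_m(\one-x)$. The substitution $x\mapsto\one-x$ and the map $p\mapsto1-p$ preserve both degree and pointwise error, so $\adeg(\OR_m)=\adeg(\AND_m)$. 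The same duality transfers the lower bound, so it suffices to handle $\AND_m$ throughout.

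For the lower bound $\adeg(\AND_m)=\Omega(\sqrt m)$ I will follow Nisan--Szegedy.

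\textbf{Step 1 (symmetrization).} Let $p$ be a degree-$d$ polynomial with $|p-\AND_m|\le\tfrac13$ on $\bn^m$. Averaging $p$ over all permutations of the coordinates gives a symmetric polynomial of degree at most $d$ with the same error bound. By Minsky--Papert, this symmetric polynomial equals $P(|x|)$ for a univariate $P$ with $\deg P\le d$.

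\textbf{Step 2 (what $P$ satisfies).} I work with $Q(t):=P(m-t)$. It satisfies $Q(0)\ge\tfrac23$, $|Q(k)|\le\tfrac13$ for every integer $k\in\{1,\dots,m\}$, and hence $|Q(1)-Q(0)|\ge\tfrac13$.

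\textbf{Step 3 (Markov/Ehlich--Zeller--Rivlin--Cheney).} This is the main obstacle, because $Q$ is controlled only at the integer points of $[0,m]$, not on the whole interval. Let $c:=\max_{t\in[0,m]}|Q(t)|\ge\tfrac23$. By the mean value theorem there is some $\xi\in[0,1]$ with $|Q'(\xi)|\ge\tfrac13$. Markov's inequality on $[0,m]$ gives $\max|Q'|\le 2d^2c/m$. So I also need a bound on $c$ in terms of the integer-point values. Suppose the maximum is attained at a non-integer $t_0$. The nearest integer is within distance $\tfrac12$, so $c\le 1+\tfrac12\max|Q'|\le 1+d^2c/m$. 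If $d^2\le m/2$ this yields $c\le2$, and then
\[
\tfrac13\le|Q'(\xi)|\le 2d^2c/m\le4d^2/m,
\]
so $d\ge\sqrt{m/12}$. If instead $d^2>m/2$ the bound holds anyway. Either way $d=\Omega(\sqrt m)$.

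Combining the upper and lower bounds gives $\Theta(\sqrt m)$ for $\AND_m$, and the duality from the first paragraph transfers this to $\OR_m$. Since the statement is classical (\cite{NS,Pat}), it would also be acceptable simply to cite it. The sketch above shows that the upper half already follows from Corollary~\ref{cor:counting}.
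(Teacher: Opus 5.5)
Your upper bound is exactly the paper's argument: the degree-$1$ count $W=\sum_i(1-x_i)$ is fed into Corollary~\ref{cor:counting}, and $\OR_m$ follows by duality. For the lower bound the paper simply cites Nisan--Szegedy and Paturi, and you reproduce their standard symmetrization-plus-Markov proof correctly except for one constant. Since $Q(0)=P(m)$ can be as large as $\tfrac43$, the integer-point bound should be $\tfrac43$ rather than $1$; in the case $d^2\le m/2$ this gives $c\le\tfrac83$ and $d\ge\sqrt m/4$, so $d=\Omega(\sqrt m)$ still holds.
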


\begin{proof}
For the upper bound, $\AND_m(x)=1$ iff $W:=\sum_{i=1}^m(1-x_i)=0$, where $W$ is multilinear of
degree $1$ and integer-valued in $\{0,\dots,m\}$; Corollary~\ref{cor:counting} then yields an
approximant of degree $O(\sqrt m)$, and $\OR_m$ is dual.  The matching lower bound
$\Omega(\sqrt m)$ is due to Nisan--Szegedy and Paturi \cite{NS,Pat}.
\end{proof}

\begin{lemma}[Robust composition and error reduction; \cite{BNRW,Sherstov}]\label{lem:robust}
\leavevmode
\begin{enumerate}
  \item (Error reduction.)  Suppose $p$ $1/3$-approximates a Boolean function $g$ and
  $p(\bn^n)\subseteq[-\tfrac13,\tfrac43]$.  For every $0<\delta\le1/3$ there is a
  polynomial $p'$ of degree $O(\deg(p)\cdot\log(1/\delta))$ with $p'(x)\in[0,1]$ that
  $\delta$-approximates $g$.
  \item (Bounded \textup{\textsc{And}}.)  If $m\ge1$ and $g_1,\dots,g_m$ are total Boolean
  functions with $\adeg(g_i)\le d$, then
  $\adeg(g_1\wedge\cdots\wedge g_m)
  =O(d\cdot\sqrt m\cdot\log(m+1))$.
  \item (\textup{\textsc{And}} of exact inner functions.)  If $m\ge1$ and
  $h_1,\dots,h_m:\bn^n\to\bn$ are computed
  \emph{exactly} by polynomials of degree $\le d$, then
  $\adeg(h_1\wedge\cdots\wedge h_m)=O(\sqrt m\cdot d)$, by composing the degree-$O(\sqrt m)$
  approximant of $\AND_m$ from Lemma~\ref{lem:and} with the $h_i$.
\end{enumerate}
\end{lemma}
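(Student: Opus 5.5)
The three parts are proved in order, each building on the previous one. Throughout I work with multilinear representatives on the Boolean cube. The one nontrivial idea is that a multilinear polynomial evaluated at a point of $[0,1]^m$ is an expectation over independent Boolean inputs.

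\emph{Error reduction.} First rescale affinely: $s:=(p+\tfrac13)\cdot\tfrac35$ maps $[-\tfrac13,\tfrac43]$ onto $[0,1]$. It sends the values $p$ takes on $0$-inputs, which lie in $[-\tfrac13,\tfrac13]$, into $[0,\tfrac25]$, and the values on $1$-inputs, which lie in $[\tfrac23,\tfrac43]$, into $[\tfrac35,1]$. Then compose with the majority amplification polynomial
\[
A_k(t)=\sum_{j>k/2}\binom kj t^j(1-t)^{k-j}.
\]
This is the probability that a majority of $k$ independent coins of bias $t$ come up $1$. Hence $A_k$ maps $[0,1]$ into $[0,1]$, and by a Chernoff bound $A_k(t)\le e^{-\Omega(k)}$ for $t\le\tfrac25$ and $A_k(t)\ge1-e^{-\Omega(k)}$ for $t\ge\tfrac35$. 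Taking $k=O(\log(1/\delta))$ and $p':=A_k\circ s\circ p$ gives values in $[0,1]$, error at most $\delta$, and degree $O(\deg(p)\log(1/\delta))$.

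\emph{Bounded AND.} This is the step needing care, because the outer approximant is evaluated at non-Boolean points. Let $P$ be the multilinear representative of the $\AND_m$ approximant from Lemma~\ref{lem:and} (via Corollary~\ref{cor:counting}). Then $P$ has degree $O(\sqrt m)$ and satisfies $P(\bn^m)\subseteq[-\tfrac13,1]$.

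For each $i$, take a $\tfrac13$-approximant of $g_i$ of degree $d$. It may take values outside $[-\tfrac13,\tfrac43]$, so I first apply the standard univariate clipping composition to bound its range; this costs a constant factor in degree, and I expect it to be routine. Then apply part~1 with $\delta:=1/(20m)$. This yields $p_i$ of degree $O(d\log(m+1))$ with values in $[0,1]$ and $|p_i-g_i|\le\delta$.

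Set $Q(x):=P(p_1(x),\dots,p_m(x))$. Since $P$ is multilinear and each $p_i(x)\in[0,1]$, we have $Q(x)=\E[P(Y)]$ for independent bits $Y_i$ with $\Pr[Y_i=1]=p_i(x)$. Each $Y_i$ differs from $g_i(x)$ with probability at most $\delta$, so by a union bound $Y=(g_1(x),\dots,g_m(x))$ except with probability at most $m\delta=\tfrac1{20}$. Because $P$ is bounded in $[-\tfrac13,1]$ on the cube, it follows that $|Q(x)-\AND(g(x))|\le\tfrac13+\tfrac1{20}\cdot\tfrac43<\tfrac12$.

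So $Q$ approximates $g_1\wedge\cdots\wedge g_m$ with error bounded away from $\tfrac12$, and its range is bounded. One more constant-factor application of part~1 restores error $\tfrac13$. The total degree is $O(\sqrt m\cdot d\log(m+1))$.

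\emph{Exact inner functions.} Here no robustness is needed. On Boolean $x$, the vector $(h_1(x),\dots,h_m(x))$ is a Boolean point, so $P(h_1,\dots,h_m)$ agrees there with $P$ evaluated on the cube. It therefore $\tfrac13$-approximates the conjunction, and its degree is at most $\deg P\cdot d=O(\sqrt m\, d)$.
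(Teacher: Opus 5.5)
Your proof is correct, but it does not follow the paper. The paper gives no argument for parts~1 and~2 and simply cites \cite{BNRW,Sherstov}. The robust-polynomial machinery there handles real-valued, noisy inner inputs directly, and in Sherstov's form it even removes the $\log(m+1)$ factor.

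You instead prove the weaker stated bound from scratch. You amplify each inner approximant to error $1/(20m)$ with range $[0,1]$, which is exactly where the $\log(m+1)$ comes from. You then use that a multilinear polynomial evaluated on $[0,1]^m$ is an expectation over independent Boolean inputs, so the outer $\AND_m$ approximant incurs only a union-bounded extra error. This gives a self-contained proof that uses only Lemma~\ref{lem:and}. It is the same boost-to-$[0,1]$-then-combine pattern the paper itself uses in Appendix~\ref{app:totalization}. Part~3 coincides with the paper's inline justification.

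Two minor remarks:
\begin{enumerate}
  \item The clipping step is unnecessary. Each $g_i$ is total, so every Boolean input is a $0$- or $1$-input, and any $\tfrac13$-approximant already takes values in $[-\tfrac13,\tfrac43]$.
  \item Part~1 as stated does not literally apply to $Q$, whose error is $\tfrac25>\tfrac13$. The easiest repair is to first apply part~1 to $P$ itself with $\delta=\tfrac14$ and take the multilinear representative. This gives a $[0,1]$-valued outer approximant and a final error of at most $\tfrac14+\tfrac1{20}<\tfrac13$ with no post-processing. Alternatively, rescale $Q$ by $t\mapsto\tfrac35(t-\tfrac12)+\tfrac12$, which separates the two cases around $\tfrac12$, and then apply a constant-size majority amplification.
\end{enumerate}
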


\subsection{A black-box totalization}\label{sec:blackbox}

We package the two standard cheat-sheet ingredients used later into one statement.  It combines
the totalization ``$\mathrm{II}\Rightarrow\mathrm{I}$'' of \cite[\S5.1,
Claims~16--17]{BBGJK} with the degree principle of \cite[\S6.1]{BBGJK} and
\cite[Lem.~19]{ABK}.  The persistence hypothesis below says that an accepted descriptor
identifies an ordinary certificate whose locations depend only on the descriptor.

\begin{lemma}[Verifiable-puzzle totalization]\label{lem:totalization}
Let $F:\bn^m\to\str$ be a partial function and $z\in F^{-1}(*)$.  For each $b\in\bn$,
suppose there is a total predicate $\varphi_b:\bn^m\times\bn^{\ell'}\to\bn$ such that
\[
  F(u)=b\quad\Longleftrightarrow\quad\exists y\ \varphi_b(u,y)=1.
\]
Suppose also that each descriptor $y$ determines a coordinate set
$I_b(y)\subseteq[m]$ of size at most $w$ such that, whenever $\varphi_b(u,y)=1$,
\[
  \varphi_b(u',y)=1
  \quad\text{for every $u'$ agreeing with $u$ on $I_b(y)$}.
\]
Fix $k\ge1$, put $H:=2^k$, and assume $H\le\mathrm{poly}(m)$.  Then there is a total
function $g$ on exactly $km+Hk\ell'$ bits with
\[
  \C_0(g)\ \ge\ \min\{H,\Cb{0}(F,z),\Cb{1}(F,z)\},
  \qquad
  \UC_1(g)\ \le\ k(\ell'+w).
\]
Moreover, if $\adeg(\varphi_b)\le d$ for $b\in\bn$, then
\[
  \adeg(g)\ \le\ \Ot(d).
\]
Concretely, the input to $g$ consists of $k$ base strings
$u^{(1)},\dots,u^{(k)}\in\bn^m$ and an array of $H$ cells, each containing $k$
descriptors.  If
$s:=(F(u^{(1)}),\dots,F(u^{(k)}))\in\bn^k$, then $g=1$ iff cell $s$ contains
descriptors accepted by the corresponding predicates; otherwise $g=0$.
\end{lemma}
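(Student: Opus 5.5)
We prove Lemma~\ref{lem:totalization} by analysing directly the cheat-sheet function $g$ described at the end of the statement, with three separate parts: the unambiguous upper bound, the $\C_0$ lower bound, and the approximate-degree bound.

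\emph{The bound on $\UC_1$.} For a $1$-input $(u^{(1)},\dots,u^{(k)},\text{array})$ with $s=(F(u^{(i)}))_i$, the cell $s$ holds descriptors $y_1,\dots,y_k$ with $\varphi_{s_i}(u^{(i)},y_i)=1$. As the certificate we take the whole contents of cell $s$ ($k\ell'$ bits) together with the bits of $u^{(i)}$ on $I_{s_i}(y_i)$ (at most $kw$ bits), for a total size of at most $k(\ell'+w)$. By persistence, every completion still satisfies $\varphi_{s_i}(u'^{(i)},y_i)=1$. Since $F(u')=b$ iff some $y$ is accepted by $\varphi_b$, and the predicates for $b=0,1$ cannot both accept (as $F$ is a function), we get $F(u'^{(i)})=s_i$. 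Hence $g=1$ on every completion, and the certificate is valid. For unambiguity, note that $s$ is determined by the input, and within cell $s$ the descriptor contents are fixed exactly. So two distinct certificates either name different cells (and then only one of them can match the true $s$), or they name the same cell with different contents and are therefore inconsistent with each other. The certificate family is thus unambiguous.

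\emph{The bound on $\C_0$.} Take the $0$-input with every $u^{(i)}=z$ and the array filled with garbage, which is a $0$-input because each $F(u^{(i)})=*$. We argue by contradiction. Suppose a $0$-certificate $\rho$ has size less than $\min\{H,\Cb0(F,z),\Cb1(F,z)\}$. Then $\rho$ restricted to each block $u^{(i)}$ is smaller than both co-certificate complexities. It therefore fails to rule out $0$ and fails to rule out $1$, so each block can be completed to take either value. Because $|\rho|<H$, some cell $c$ is touched by $\rho$ in none of its bits. We complete block $i$ so that $F(u^{(i)})=c_i$, and write valid accepted descriptors into cell $c$. This produces a completion with $g=1$, contradicting that $\rho$ is a $0$-certificate. (This is Claims~16--17 of \cite{BBGJK}.)

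\emph{The bound on $\adeg$.} This is the main obstacle, because $F$ is partial and so $F(u^{(i)})$ cannot be approximated directly. Instead we write
\[
g=\sum_{c\in\bn^k}\ \prod_{i=1}^k \varphi_{c_i}\bigl(u^{(i)},y^{c}_i\bigr),
\]
where $y^c_i$ is the $i$-th descriptor stored in cell $c$. The identity holds because a nonzero summand forces $F(u^{(i)})=c_i$ for every $i$, so at most one $c$ can contribute.

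To approximate this sum, we first use error reduction (Lemma~\ref{lem:robust}(1)) to approximate each $\varphi_b$ within $\delta=1/(3kH)$ by a polynomial with values in $[0,1]$; this costs degree $O(d\log(kH))=\Ot(d)$, using $H\le\mathrm{poly}(m)$. We then take the product of $k=O(\log m)$ such approximants, whose degree is $\Ot(d)$ and whose error is at most $k\delta$ since all factors lie in $[0,1]$. Finally we sum over the $H$ cells.

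The step that needs care is controlling the error accumulated over all $H$ terms. The sum of errors over the cells is at most $Hk\delta\le 1/3$, which suffices. This gives $\adeg(g)\le\Ot(d)$.
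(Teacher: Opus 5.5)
Your proposal is correct. The $\UC_1$ and $\C_0$ parts follow the paper's argument essentially verbatim. In the unambiguity argument, the one case you leave implicit (same cell, same contents, different base values on the common support) is immediate.

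The degree bound takes a different and more elementary route. The paper treats each cell indicator $g_c=\bigwedge_t\varphi_{c_t}(u^{(t)},y_c^{(t)})$ as a black-box \textsc{And} of $k$ functions of approximate degree $d$. It invokes the robust composition bound of Lemma~\ref{lem:robust}(2) to get degree $O(d\sqrt k\log k)$, and only then error-reduces the cell approximant to $1/(3H)$. The total is $O(d\sqrt k\log k\cdot\log H)$. You instead amplify each base predicate to error $1/(3kH)$ with range $[0,1]$ first and multiply the $k$ factors directly. You control the error by the telescoping estimate $|\prod p_i-\prod a_i|\le\sum|p_i-a_i|$, which holds precisely because every factor lies in $[0,1]$. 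This gives degree $O(dk\log(kH))$. Your version needs nothing beyond error reduction, with no robust \textsc{And}-composition theorem. The paper's version has a slightly smaller polylogarithmic overhead ($\sqrt k\log k$ instead of $k$). Since $k=\log H=O(\log m)$, both are $\Ot(d)$. Both also rest on the same observation: at most one cell term is nonzero on any input, so the $H$ per-cell errors simply add to at most $1/3$ with no \textsc{Or} penalty.
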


A proof is included in Appendix~\ref{app:totalization} for completeness.

\section{A verifiable quadratic puzzle solution}\label{sec:construction}

This section constructs the hard partial function; Section~\ref{sec:certifier} supplies its
low-degree verifiers.  The complete package used in the final totalization is the following.

\begin{theorem}[Base function]\label{thm:base}
For every sufficiently large power of two $n$, there are a partial function
$f:\bn^{2n^2}\to\str$, an input $z\in f^{-1}(*)$, and total witness predicates
$\varphi_0,\varphi_1$ with a common descriptor length
$\ell'=\Theta(n\log n)$ such that:
\begin{enumerate}
  \item $n\le\C(f)\le O(n\log n)$ and
  \[
    \Cb{1}(f,z)\ge\frac{n(n-1)}2,
    \qquad
    \Cb{0}(f,z)\ge(\tfrac12+o(1))n^2;
  \]
  \item $f(x)=b$ iff some descriptor $y$ satisfies $\varphi_b(x,y)=1$;
  \item each accepted descriptor persists after fixing a descriptor-dependent set of at most
  $O(n\log n)$ input coordinates; and
  \item $\adeg(\varphi_0),\adeg(\varphi_1)=\Ot(\sqrt n)$.
\end{enumerate}
\end{theorem}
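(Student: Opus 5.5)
I would prove Theorem~\ref{thm:base} by first recalling the construction of \cite{Bal} and then modifying only its $0$-side. The input $x\in\bn^{2n^2}$ is viewed as an $n\times n$ array of cells, each carrying two bits. It is indexed by pairs $(i,j)$ with $i,j\in[n]$, and the $1$-certificates are the ``set-pair'' assignments of \cite{Bal}. Each candidate $1$-certificate $A_i$ fixes $O(n)$ coordinates, namely a row/column pattern, and any two of them conflict in a controlled number of positions. The condition $f(x)=1$ is kept verbatim, namely that $x$ is consistent with some $A_i$.

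For the $0$-side I would define an explicit structured family of partial assignments $B_y$. Each is described by a descriptor $y$ of length $\Theta(n\log n)$, which lists $O(n)$ coordinates together with pointers of $O(\log n)$ bits each. The family is chosen so that $B_y$ conflicts with every $A_i$, and $f(x)=0$ iff $x$ agrees with some such $B_y$. Disjointness of the $0$- and $1$-sides then follows from the conflict property. The bound $\C(f)\le O(n\log n)$ follows from the widths, and $\C(f)\ge n$ is witnessed by the width of a single $A_i$. Items~2 and~3 are then immediate if $\varphi_b(x,y)$ checks that $y$ is well formed and that $x$ agrees with the described certificate. The persistence set is just the coordinates fixed by that certificate.

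For the co-certificate bounds at the all-$*$ witness $z$, I would use the surviving-pairs argument of \cite{Bal}, which the paper says is inherited in an appendix. To establish $\Cb{1}(f,z)$, note that a partial assignment ruling out output $1$ must kill every $A_i$. By the pairwise structure this requires touching one distinguished coordinate per unordered pair, which gives $n(n-1)/2$. For $\Cb{0}(f,z)$ I would count surviving pairs: if fewer than $(\frac12-o(1))n^2$ coordinates are fixed, some structured $B_y$ remains consistent. The key point to check is that restricting to the structured family does not destroy this argument. It suffices that, for each choice of the $O(n)$ undetermined positions, enough descriptors survive, which is exactly what the explicit family is designed for.

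The main obstacle is item~4, specifically $\adeg(\varphi_0)$. The plan follows the overview. I would write $\varphi_0(x,y)=[T(x,y)=0]$, where $T$ is the sum of two parts. The first is a local deficiency, a sum of $O(n)$ exact low-degree checks: descriptor well-formedness and agreement of $x$ with $B_y$, each of degree $O(\log n)$ in the pointer bits. The second is a count $W(x,y)$ of candidate pairs that remain compatible with $B_y$. Each summand of $W$ is a product of $\polylog(n)$ pointer-indicator bits and is therefore of degree $\polylog n$. The crux is to prove $0\le W\le \Ot(n)$ on every Boolean input, including malformed $y$. I would argue this by showing that each pointer in $y$ can activate only $\polylog(n)$ summands, using the sparsity of the conflict structure. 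If this fails for malformed descriptors, I would first fall back to clipping via well-formedness indicators multiplied in. Corollary~\ref{cor:counting} with $L=\Ot(n)$ then gives degree $\Ot(\sqrt n)$. The verifier $\varphi_1$ is an $\AND$ of $O(n)$ exact checks selected by an index of $\log n$ bits, so Lemma~\ref{lem:robust}(3) gives $\Ot(\sqrt n)$ directly.
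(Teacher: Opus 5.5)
Your proposal restates the overview's strategy but never specifies the objects that carry the proof. At the points where you defer to them (``the family is chosen so that \dots'', ``enough descriptors survive, which is exactly what the explicit family is designed for'') it assumes what has to be shown. You also never fix the hard input, and ``the all-$*$ witness'' is not meaningful: $z$ must be a Boolean input with $f(z)=*$, and both co-certificate bounds depend on which one you pick.

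The paper uses the diagonal input $z$, with $z_{i,i}=(1,0)$ and $z_{i,j}=(0,1)$ for $i\ne j$. Its $0$-family consists of descriptors $(j,A)$: a kill column $j$ and a list $A$ of $M=(\tfrac12+o(1))n$ row names. Such a descriptor is valid when every row outside the accepted set $S_A$ has entry $(0,0)$ in column $j$, and every surviving pair $(u,v)$ of $S_A$ has $x_{u,v,1}=x_{v,v,2}=0$.

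The canonical placement of these two spoiling bits is the key point. Both bits are $0$ in $z$, so they can be set in every completion of a $\rho$ consistent with $z$. If $|\rho|<nM$, averaging gives a column $j$ in which $\rho$ touches fewer than $M$ rows; these rows form $S$. The untouched entries of column $j$ are then killed, and the surviving-pairs lemma limits the spoiling to at most $L$ pairs. This is exactly why restricting to a structured family costs nothing, and your proposal offers no substitute.

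Likewise, $\Cb{1}(f,z)\ge\binom n2$ comes from flipping $z_{a,b,1},z_{b,a,1}$ to make rows $a,b$ match. That argument is specific to this $z$ and to the clean-matching $1$-condition. It does not follow from a generic ``set-pair'' structure with width-$O(n)$ $1$-certificates; the actual $1$-certificates have width $2n(\ell+2)$.

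For item~4, your route to $0\le W\le\Ot(n)$ on malformed descriptors does not work as stated.
\begin{itemize}
\item The surviving-pairs lemma bounds the \emph{total} number of surviving pairs of any $S$ with $|S|\le M_0$, uniformly over $S$. It gives no per-element polylogarithmic bound, so ``each pointer activates only $\polylog(n)$ summands'' is unsupported.
\item Your fallback of multiplying in well-formedness indicators fails outright. The natural condition, sortedness of the list (which is what makes a polylog-depth membership search correct), is an \textsc{And} of $M-1$ comparisons with exact degree $\Omega(n)$: it restricts to an \textsc{And} of about $M/2$ free bits. Multiplying it in destroys $\deg W=\polylog(n)$. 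Replacing it by an approximant makes $W$ non-integer, which invalidates Corollary~\ref{cor:counting}.
\end{itemize}

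The missing idea is the binary-search membership $\mem_A$. It is run even on unsorted lists and is an exact polynomial of degree $O(\log^2 n)$. On \emph{every} list it accepts at most $M$ values, because distinct accepted values terminate at distinct list positions. Hence $|S_A|\le M\le M_0$ always, and the uniform surviving-pairs lemma gives $W\le L=n\log n$ on the whole cube, with no well-formedness test at all.

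Finally, $\C(f)\ge n$ cannot be ``witnessed by the width of a single $A_i$'': a certificate's width only upper-bounds certificate complexity. The paper instead works at the all-ones input, where any $\rho$ fixing fewer than $n$ bits misses some column. That column can be killed, with $A$ a single repeated row, to give a $0$-completion.
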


\subsection{Combinatorial setup}\label{sec:prior-construction}

\paragraph{Parameter conventions.}
Throughout the construction, $n$ ranges over powers of two and
\[
  \ell:=\log n,\qquad M:=\bigl\lfloor n^{(\ell+1)/(\ell+2)}\bigr\rfloor,\qquad L:=\ell n .
\]
We fix a bijection between $\{0,1\}^{\ell}$ and $[n]$, so every $\ell$-bit string names a row
or column.  For all sufficiently large $n$, $\ell\ge5$ and $1\le M<n$.  Writing
$M_0:=n^{(\ell+1)/(\ell+2)}$, we have $M\le M_0<M+1$ and, because $n=2^\ell$,
\[
  M=(\tfrac12+o(1))n,\qquad nM=(\tfrac12+o(1))n^2 ,
\]
where $o(1)$ is as $n\to\infty$.  These conventions make every verifier below a total
function of its Boolean descriptor bits.

We recall the part of the construction in \cite{Bal} used here.\footnote{The joint paper
\cite{BBGJK} recasts the optimal $\alpha=2$ solution in terms of \emph{everywhere almost-hittable}
hypergraphs; we instead build on the original matrix/matching formulation of \cite{Bal}, whose
association maps $r_1,\dots,r_\ell$ are exactly what the verifier of Section~\ref{sec:certifier}
exploits.}  An input $x\in\bn^{2n^2}$ consists of variables
$x_{i,j,b}$ with $i,j\in[n]$, $b\in\{1,2\}$, read as an $n\times n$ matrix whose $(i,j)$ entry
is the pair $(x_{i,j,1},x_{i,j,2})$; row $i$ is denoted $x_i$.  Two entries
$(a_1,a_2),(b_1,b_2)$ \emph{match} if $(a_1\wedge b_1)\vee(a_2\wedge b_2)=1$; two distinct rows
match if their entries match in every column.  A row is \emph{bad} if some entry equals $(0,0)$.
Note that a $(0,0)$ entry matches no entry, so two matching rows are automatically non-bad.

The next lemma is the combinatorial core of the construction: for a collection of fixed
``association'' maps, no moderately small set of rows can contain too many ordered pairs all of
whose associated rows also lie in the set.  We call such pairs the \emph{surviving} pairs of the
set; the bound on their number both keeps $0$-certificates short
(Section~\ref{sec:hardness}) and caps the range of the verifier's count $W$
(Section~\ref{sec:certifier}).

\begin{lemma}[Surviving pairs; {\cite[Lem.~1]{Bal}}]\label{lem:sparsity}
Let $n,\ell\in\mathbb N$ with $\ell\ge5$, and set
$M_0:=n^{(\ell+1)/(\ell+2)}$ and $L:=\ell n$.  There is a fixed
collection of maps $r_1,\dots,r_\ell:[n]\times[n]\to[n]$ such that for \emph{every}
$S\subseteq[n]$ with $|S|\le M_0$,
\[
  \bigl|\{(i,j)\in S\times S:\ \forall k\in[\ell]\ r_k(i,j)\in S\}\bigr|\ \le\ L .
\]
\end{lemma}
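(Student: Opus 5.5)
The plan is to use the probabilistic method. Choose the maps $r_1,\dots,r_\ell:[n]\times[n]\to[n]$ by letting every value $r_k(i,j)$ be an independent, uniformly random element of $[n]$. Then show that with positive probability no set $S$ of size at most $M_0$ has more than $L$ surviving pairs. Any outcome of this event is the required fixed collection.

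\textbf{Union bound.} Fix $s\le M_0$, a set $S\subseteq[n]$ with $|S|=s$, and a set $P\subseteq S\times S$ of exactly $L+1$ ordered pairs. A given pair $(i,j)$ survives iff $r_k(i,j)\in S$ for every $k\in[\ell]$. This has probability $(s/n)^\ell$. Distinct pairs involve disjoint sets of random values, so
\[
  \Pr[\text{all pairs of }P\text{ survive}]=(s/n)^{\ell(L+1)} .
\]
If some $S$ had more than $L$ surviving pairs, then some such $P$ would survive entirely. Hence the failure probability is at most
\[
  \sum_{s\le M_0}\binom{n}{s}\binom{s^2}{L+1}\Bigl(\frac sn\Bigr)^{\ell(L+1)}
  \ \le\ \sum_{s\le M_0} 2^n\Bigl(\frac{e\,s^2\,(s/n)^\ell}{L+1}\Bigr)^{L+1}.
\]
Terms with $s^2<L+1$ vanish, so we may assume $s^2\ge L+1$ when using $\binom{a}{b}\le(ea/b)^b$.

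\textbf{Key estimate.} The choice $M_0=n^{(\ell+1)/(\ell+2)}$ is made exactly so that
\[
  s^{2}(s/n)^\ell=\frac{s^{\ell+2}}{n^\ell}\le \frac{n^{\ell+1}}{n^\ell}=n
  \qquad\text{for } s\le M_0 .
\]
Therefore the bracket is at most $en/(\ell n+1)<e/\ell$. Each summand is then at most
\[
  2^n (e/\ell)^{\ell n}=\bigl(2\,(e/\ell)^\ell\bigr)^n .
\]
The function $\ell\mapsto \ell\ln(e/\ell)$ has derivative $-\ln\ell<0$, so $(e/\ell)^\ell$ is decreasing for $\ell\ge1$. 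For $\ell\ge5$ this gives $(e/\ell)^\ell\le (e/5)^5<0.05$. Each summand is therefore below $0.1^n$. With at most $n+1$ values of $s$, the total is below $(n+1)\,0.1^n<1$ for every $n\ge1$.

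\textbf{Main obstacle.} The only delicate point is the exponent bookkeeping. We must make sure the $2^n$ from choosing $S$ is beaten by the $(L+1)$-st power, which it is because $L=\ell n$ is linear in $n$ and $\ell\ge5$. We must also make sure $M_0$ is exactly the threshold at which the expected number of surviving pairs, $s^{\ell+2}/n^\ell$, is $O(n)$. The remaining checks are routine:
\begin{itemize}
  \item diagonal pairs $(i,i)$ may simply be counted, which only strengthens the bound;
  \item non-integer $M_0$ is harmless, since $s$ ranges over integers $\le M_0$.
\end{itemize}
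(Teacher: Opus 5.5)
Your proof is correct. Its skeleton matches the paper's: independent uniform values $r_k(i,j)$, the observation that distinct pairs use disjoint randomness, the estimate $s^{\ell+2}/n^\ell\le n$ that the choice of $M_0$ is designed to give, and a union bound over the at most $2^n$ sets $S$.

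The difference is in how you control the tail of the surviving-pair count $Y^S$. The paper treats $Y^S$ as a sum of independent indicators with mean at most $\mu=n$. It then applies the multiplicative Chernoff bound with $\eps=\ell-1$, getting $\Pr[Y^S\ge\ell n]\le e^{-(\ell-3)n}$, which beats $2^n$ once $\ell>4$. You instead union-bound over all $(L+1)$-subsets of pairs. This is exactly the elementary derivation of the tail estimate $\Pr[X\ge t]\le\binom{N}{t}p^t\le(e\mu/t)^t$.

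What each route buys:
\begin{itemize}
  \item Your version needs no external concentration inequality, only the product formula for the $L+1$ chosen pairs. It also gives a stronger per-set bound, $(e/\ell)^{\ell n}=e^{-\ell(\ln\ell-1)n}$ versus $e^{-(\ell-3)n}$.
  \item The paper's version is shorter, given a citable Chernoff bound in the form valid under $\E[X]\le\mu$.
\end{itemize}

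Two small remarks. First, passing from the exponent $L+1=\ell n+1$ to $\ell n$ uses $e/\ell<1$; this holds for $\ell\ge5$, but you leave it implicit. Second, the caveat restricting to $s^2\ge L+1$ is unnecessary, since $\binom{a}{b}\le a^b/b!\le(ea/b)^b$ holds for all integers $a\ge0$, $b\ge1$.
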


The statement in \cite{Bal} fixes $S$ in advance and bounds the count with probability $1-o(1)$
over random maps; the union bound in its proof gives the uniform version above, and is reproduced
in Appendix~\ref{app:sparsity}.

\noindent
With $\ell=\log n$, we have $M=(\tfrac12+o(1))n$ and $L=n\log n=\Ot(n)$.
Since the count in Lemma~\ref{lem:sparsity} is over
\emph{all} $(i,j)\in S\times S$, it dominates the count restricted to $i\ne j$, which is therefore
also $\le L$ for every $S$ with $|S|\le M$.

Given these fixed maps, the rows
$x_{r_1(i_1,i_2)},\dots,x_{r_\ell(i_1,i_2)}$ are the rows \emph{associated} with the ordered pair
$(i_1,i_2)$.  We say that $(i_1,i_2)$ is a \emph{clean matching pair} on $x$ if
$i_1\ne i_2$, the rows $x_{i_1},x_{i_2}$ match, and none of the associated rows is bad.

Let $P(x)$ denote the existence of a clean matching pair.  The partial function $f'$ introduced
in \cite{Bal} is defined by $f'(x)=1$ iff $P(x)$ holds; $f'(x)=0$ iff there is a partial
assignment $\rho$, consistent with $x$ and of size at most $(2\ell+2)n$, no completion of which
satisfies $P$; and $f'(x)=*$ otherwise.

\subsection{Canonical certificates and the partial function}\label{sec:function}

The function $f$ keeps the clean-matching $1$-condition of $f'$ and replaces its $0$-side by
a structured family of certificates.  We define those certificates first, so that the
subsequent definition of $f$ is self-contained.

\subsubsection{Canonical \texorpdfstring{$0$}{0}-certificate descriptors}\label{sec:encoding}

\begin{definition}[Canonical $0$-certificate descriptor]\label{def:canon}
A \emph{canonical $0$-certificate descriptor} is a pair $(j,A)$ where
\begin{itemize}
  \item $j\in[n]$ is a \emph{kill column};
  \item $A=(a_1,\dots,a_M)\in[n]^M$ is a \emph{row list} (each $a_s$ is a
  $\log n$-bit name), thought of as describing a set of rows.
\end{itemize}
\end{definition}

\noindent
The descriptor carries no pair list: the two zeros that spoil each surviving pair of its row set
are read directly from $x$ at canonical positions.  Its encoding length is
\[
  \ell'=(1+M)\log n=\Theta(n\log n)=\Ot(n).
\]

\paragraph{Binary-search membership.}
If
$C=(c_1,\dots,c_T)$ is a list over a totally ordered domain $\mathcal D$ and $y\in\mathcal D$,
let $\mem_C(y)$ be the result of the following lower-bound search, run even when $C$ is not
sorted.  Maintain an interval $[\lambda,\mu)\subseteq\{1,\dots,T+1\}$, initially
$[1,T+1)$.  While $\lambda<\mu$, read the middle entry
$c_m$ with $m=\lfloor(\lambda+\mu)/2\rfloor$; if $c_m<y$ set $\lambda:=m+1$, and otherwise set
$\mu:=m$.  At the end, accept iff $\lambda\le T$ and the entry $c_\lambda$ equals $y$.
For the row list $A$ the ordered domain is $[n]$, and we write
$S_A:=\{i\in[n]:\mem_A(i)=1\}$.

\begin{proposition}\label{prop:membership}
For every list $A$ (sorted or not) and every $i$, $\mem_A(i)$ depends on the descriptor bits
through a decision tree of depth $O(\log^2 n)$ (so it is computed exactly by a multilinear
polynomial of degree $O(\log^2 n)$ in the descriptor bits).  Moreover $\mem_A(i)=1$ only if
$i$ equals some name actually appearing in $A$, and distinct accepted values occupy distinct
list positions; hence $|S_A|\le M$.  If $A$ is sorted, then $\mem_A$ accepts exactly the values
appearing in $A$.
\end{proposition}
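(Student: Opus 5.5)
We plan to read off all four claims of Proposition~\ref{prop:membership} directly from the search procedure defining $\mem_A$. The argument splits into three parts: the depth bound, the "only names in $A$, on distinct positions" claim, and correctness for sorted lists.

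\textbf{Depth bound.} The search runs on $\{1,\dots,T+1\}$ with $T=M<n$. Each iteration at least halves the interval length $\mu-\lambda$, so there are at most $\lceil\log(M+1)\rceil=O(\log n)$ iterations. Each iteration compares the entry $c_m$, an $\ell$-bit name, with the fixed value $i$. This comparison is decided by a decision tree of depth at most $\ell=\log n$ that reads the bits of $c_m$ from most significant downward. The final test $c_\lambda=i$ reads another $\ell$ bits.

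Composing these trees adaptively gives a decision tree of depth $O(\log n)\cdot\log n+\log n=O(\log^2 n)$. The positions $m$ read at each step depend only on earlier outcomes, so the composition is a legitimate decision tree. The standard fact that a depth-$D$ decision tree is computed exactly by a multilinear polynomial of degree $\le D$ then gives the polynomial statement. That polynomial is the sum, over accepting leaves, of the product of the literals along each path.

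\textbf{Only names in $A$, on distinct positions.} Acceptance requires $\lambda\le T$ and $c_\lambda=i$, so every accepted $i$ is a name appearing in $A$. For each accepted $i$, define $\pi(i)$ to be the final value $\lambda$ of the search on input $i$. Then $c_{\pi(i)}=i$, so $i\ne i'$ forces $c_{\pi(i)}\ne c_{\pi(i')}$ and hence $\pi(i)\ne\pi(i')$. Thus $\pi$ is an injection from $S_A$ into $[M]$, giving $|S_A|\le M$. This is the one place where unsorted lists might seem problematic, but the injection argument needs no sortedness at all.

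\textbf{Sorted lists.} Suppose $A$ is sorted (nondecreasing). We maintain the invariant that every position $p<\lambda$ has $c_p<i$ and every position $p\ge\mu$ has $c_p\ge i$, with the natural convention for $p=T+1$. Both updates of the search preserve this invariant by monotonicity of the list.

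At termination $\lambda=\mu$, so $\lambda$ is the least position with $c_\lambda\ge i$. If $i$ appears in $A$, this position exists, is $\le T$, and holds $i$ by sortedness, so the search accepts. Combined with the previous part, the accepted set is exactly the set of values appearing in $A$.

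I expect no real obstacle here. The only point needing care is the first part: stating the depth composition correctly, noting that comparisons of $\ell$-bit names cost $\ell$ queries each.
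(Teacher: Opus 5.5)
Your proposal is correct and takes essentially the same route as the paper. You compose $\log n$-depth comparison subtrees into $O(\log M)$ search steps, you charge each accepted value injectively to its final position $\lambda$ (which needs no sortedness), and you use the standard lower-bound invariant for sorted lists; you just spell out the halving count, the comparison subtree, and the invariant in more detail than the paper does.
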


\begin{proof}
The lower-bound search makes $O(\log T)$ comparison probes on a length-$T$ list.  For $A$,
each comparison reads one $\log n$-bit row name and compares it to the fixed query $i$, so the
decision-tree depth is $O(\log M\cdot\log n)=O(\log^2 n)$.  A decision tree is computed exactly
by a multilinear polynomial of degree at most its worst-case depth, so
$\deg(\mem_A(i))=O(\log^2 n)$.

Acceptance requires the final inspected list entry to equal $i$, so every accepted $i$ is a name
that appears in $A$.  Charge such an accepted value to the final position $\lambda$ returned by
the search.  Two distinct accepted values cannot be charged to the same position, since that
position contains a single row name.  Thus $|S_A|\le M$, regardless of whether $A$ is sorted or
contains duplicates.  If $A$ is sorted and contains $i$, the usual lower-bound invariant says
that the returned position is the first entry at least $i$, hence that entry equals $i$; sorted
lists therefore realize true membership in their set of appearing values.
\end{proof}

\noindent
The point of Proposition~\ref{prop:membership} is robustness: \emph{whatever} the list
contents, $S_A$ is an honest set of size at most $M$.  When we build a descriptor
(Section~\ref{sec:hardness}) we sort $A$, so $\mem_A$ realizes true membership; when we
upper-bound degree on adversarial cell contents (Section~\ref{sec:certifier}), only the
bounded-size guarantee is used.

\subsubsection{The validity predicate}\label{sec:validity}

Fix the maps $r_1,\dots,r_\ell$ of Lemma~\ref{lem:sparsity}.  The intended semantics are
as follows.  The kill-column condition $(K)$ says that every row not accepted by $A$ is killed by
having a $(0,0)$ entry in column $j$.  Thus only rows in $S_A$ remain alive, and the ordered pairs
whose two rows and all of whose associated rows lie in $S_A$ are precisely the surviving pairs of
$S_A$ in the sense of Lemma~\ref{lem:sparsity}.  Each surviving pair must be \emph{spoiled} by two
canonical zeros certifying that its rows do not match.  The count $W$ is the number of surviving
pairs not so spoiled.

For an ordered pair of distinct rows $(u,v)$ define the \emph{residual indicator}
\[
  E_A(u,v)\ :=\ \mem_A(u)\,\mem_A(v)\,\prod_{k=1}^{\ell}\mem_A\bigl(r_k(u,v)\bigr)\ \in\bn,
\]
which is $1$ exactly when $u,v$ and all $\ell$ associated rows lie in $S_A$---that is, exactly on
the surviving pairs of $S_A$.  Define the \emph{canonical non-matching witness at column $v$}
\[
  Z_{u,v}\ :=\ (1-x_{u,v,1})(1-x_{v,v,2})\ \in\bn .
\]
Finally define the \emph{residual deficiency count}
\[
  \boxed{\;W(A,x)\ :=\ \sum_{\substack{u,v\in[n]\\ u\ne v}}
        E_A(u,v)\,\bigl(1-Z_{u,v}\bigr)\;}
\]
and the two validity conditions
\begin{align}
  (K):&\quad \forall i\in[n]:\quad \mem_A(i)=1 \ \ \text{or}\ \ (x_{i,j,1},x_{i,j,2})=(0,0);
        \label{eq:K}\\
  (R):&\quad W(A,x)=0. \label{eq:R}
\end{align}
\begin{definition}[Validity]\label{def:valid}
The canonical $0$-certificate descriptor $(j,A)$ is \emph{valid on $x$} if both \eqref{eq:K} and
\eqref{eq:R} hold.
\end{definition}

\begin{remark}\label{rem:Z}
The witness $Z_{u,v}$ correctly certifies non-matching of rows $u,v$ \emph{at column $v$}.
Indeed, if $Z_{u,v}=1$ then $x_{u,v,1}=0$ and $x_{v,v,2}=0$, so at column $v$,
$(x_{u,v,1}\wedge x_{v,v,1})\vee(x_{u,v,2}\wedge x_{v,v,2})=(0)\vee(x_{u,v,2}\wedge 0)=0$, i.e.\
the entries $(u,v)$ and $(v,v)$ do not match; hence rows $u$ and $v$ do not match.  Both bits
$x_{u,v,1}$ and $x_{v,v,2}$ are read at \emph{canonical} positions determined by $(u,v)$, so no
extra certificate data is needed.
\end{remark}

\subsubsection{The partial function and its witness predicates}

\begin{definition}\label{def:f}
$f$ is the partial Boolean function on $\bn^{2n^2}$ given by
\begin{itemize}
  \item $f(x)=1$ iff there is a clean matching pair on $x$;
  \item $f(x)=0$ iff some canonical $0$-certificate descriptor is valid on $x$;
  \item $f(x)=*$ otherwise.
\end{itemize}
The first two conditions are shown to be mutually exclusive below.
\end{definition}

The corresponding total witness predicates are
\[
  \varphi_0\bigl(x,(j,A)\bigr):=[\,(K)\wedge(R)\,],
  \qquad
  \varphi_1\bigl(x,(i_1,i_2)\bigr)
  :=[\,(i_1,i_2)\text{ is a clean matching pair on }x\,].
\]
The $0$-descriptor has length $(1+M)\log n=\Theta(n\log n)$; the $1$-descriptor has length
$2\log n$ and will be padded to the same length when Lemma~\ref{lem:totalization} is applied.

\subsubsection{Soundness and completeness}

\begin{lemma}[Soundness and completeness]\label{lem:A}
If some canonical $0$-certificate descriptor is valid on $x$, then there is no clean matching
pair on $x$.  Consequently $f$ is well defined and, for each $b\in\bn$,
\[
  f(x)=b\quad\Longleftrightarrow\quad
  \exists y\ \varphi_b(x,y)=1.
\]
\end{lemma}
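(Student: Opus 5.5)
Suppose $(j,A)$ is valid on $x$ and, for contradiction, $(i_1,i_2)$ is a clean matching pair on $x$. The plan is to show that this pair is a surviving pair of $S_A$. Validity then forces it to be spoiled by a canonical zero, which contradicts matching.

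\emph{Step 1: every relevant row lies in $S_A$.} The rows $x_{i_1}$ and $x_{i_2}$ match, and a $(0,0)$ entry matches nothing, so neither row is bad. The associated rows $x_{r_k(i_1,i_2)}$, $k\in[\ell]$, are not bad by cleanliness. A row that is not bad has no $(0,0)$ entry, in particular none in column $j$. Condition \eqref{eq:K} then forces $\mem_A(i)=1$ for every such row $i$. Hence $\mem_A(i_1)=\mem_A(i_2)=1$ and $\mem_A(r_k(i_1,i_2))=1$ for all $k$, so $E_A(i_1,i_2)=1$. This needs only the definition of $S_A$ through $\mem_A$; it does not matter whether $A$ is sorted.

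\emph{Step 2: the canonical zero appears.} Every summand of $W$ is a product of $\{0,1\}$-valued factors, so each term is nonnegative. Therefore \eqref{eq:R}, i.e.\ $W(A,x)=0$, forces every term to vanish. For $(u,v)=(i_1,i_2)$, with $i_1\ne i_2$, this gives $E_A(i_1,i_2)(1-Z_{i_1,i_2})=0$, hence $Z_{i_1,i_2}=1$. By Remark~\ref{rem:Z}, the entries $(i_1,i_2)$ and $(i_2,i_2)$ then do not match, so rows $i_1$ and $i_2$ do not match. This contradicts the assumption that $(i_1,i_2)$ is a matching pair.

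\emph{Step 3: consequences.} The $1$-condition and the $0$-condition of Definition~\ref{def:f} are now mutually exclusive, so $f$ is well defined. The equivalence $f(x)=b\iff\exists y\,\varphi_b(x,y)=1$ follows by unwinding definitions. For $b=0$, $\varphi_0$ is exactly the validity predicate of Definition~\ref{def:valid}. For $b=1$, $\varphi_1$ accepts exactly clean matching pairs $(i_1,i_2)$. All descriptor strings are legitimate: every $\log n$-bit string names an element of $[n]$, and $\varphi_1$ rejects descriptors with $i_1=i_2$.

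There is no real obstacle here. The only point needing care is Step 1: the clean and matching hypotheses ensure that none of the $\ell+2$ rows can be killed in column $j$, so \eqref{eq:K} puts all of them in $S_A$.
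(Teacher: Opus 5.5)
Your proposal is correct and follows essentially the same route as the paper's proof. In both, non-badness of the $\ell+2$ rows together with \eqref{eq:K} gives $E_A(i_1,i_2)=1$; nonnegativity of the summands and \eqref{eq:R} force $Z_{i_1,i_2}=1$, which contradicts matching by Remark~\ref{rem:Z}; and the witness characterizations then follow directly from Definition~\ref{def:f}.
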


\begin{proof}
Suppose $(j,A)$ is valid and $(u,v)$ is a clean matching pair.  Neither endpoint nor any
associated row is bad.  By $(K)$, all these rows therefore belong to $S_A$, so
$E_A(u,v)=1$.  Since $(R)$ says that the sum of the nonnegative Boolean terms
$E_A(a,b)(1-Z_{a,b})$ is zero, its $(u,v)$ term vanishes and $Z_{u,v}=1$.
Remark~\ref{rem:Z} then says that $x_u,x_v$ do not match, a contradiction.
The asserted witness characterizations now follow directly from Definition~\ref{def:f}.
\end{proof}

\subsection{Certificate width and persistence}\label{sec:certificates}

\begin{lemma}[Certificate complexity]\label{lem:certificates}
With $\ell=\log n$,
\[
  n\ \le\ \C(f)\ \le\ \max\{2n+2L,\ 2n(\ell+2)\}\ =\ 2n(\ell+2)\ =\ O(n\log n).
\]
\end{lemma}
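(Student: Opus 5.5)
The bound has three parts: the upper bound for $1$-certificates, the upper bound for $0$-certificates, and the lower bound $n$. The upper bounds come from exhibiting explicit certificates read off from the witnesses $\varphi_1,\varphi_0$. The lower bound comes from one column-killing argument on a single $1$-input. The plan uses only the definitions, Proposition~\ref{prop:membership}, Lemma~\ref{lem:sparsity}, and Lemma~\ref{lem:A}.

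\emph{$1$-certificates.} Let $x$ be a $1$-input with clean matching pair $(i_1,i_2)$.
\begin{enumerate}
  \item For each column $c$, pick $b\in\{1,2\}$ with $x_{i_1,c,b}=x_{i_2,c,b}=1$ and fix those two bits. This costs $2n$ bits and forces rows $i_1,i_2$ to match in every completion.
  \item For each of the $\ell$ associated rows $r_k(i_1,i_2)$ and each column, fix one bit of that entry that equals $1$. Such a bit exists because the row is not bad. This costs $\ell n$ bits and forces none of the associated rows to be bad.
\end{enumerate}
Every completion therefore still has $(i_1,i_2)$ as a clean matching pair, so $\C_1(f)\le 2n+\ell n$.

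\emph{$0$-certificates.} Let $x$ be a $0$-input with valid descriptor $(j,A)$. By Proposition~\ref{prop:membership}, $S_A$ depends only on the descriptor, not on $x$.
\begin{enumerate}
  \item For every row $i\notin S_A$, fix both bits of entry $(i,j)$ to $(0,0)$. This costs at most $2n$ bits and makes $(K)$ hold in every completion.
  \item For every ordered pair $u\ne v$ with $E_A(u,v)=1$, fix the two bits $x_{u,v,1}=0$ and $x_{v,v,2}=0$. These values hold because $(R)$ forces $Z_{u,v}=1$ on $x$. In every completion each term of $W$ then vanishes, so $(R)$ holds.
\end{enumerate}
The pairs with $E_A=1$ are exactly the surviving pairs of $S_A$. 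Since $|S_A|\le M\le M_0$, Lemma~\ref{lem:sparsity} bounds their number by $L$, so step 2 costs at most $2L$ bits. Every completion is thus a $0$-input, and $\C_0(f)\le 2n+2L$.

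\emph{Combining the upper bounds.} With $L=\ell n$ we have $2n+\ell n\le 2n+2L=2n+2\ell n\le 2n(\ell+2)$. This gives $\C(f)\le 2n(\ell+2)=O(n\log n)$. The only nontrivial step is the sparsity bound, and it is already supplied by Lemma~\ref{lem:sparsity}. The point to check is that the fixed coordinates depend only on $(j,A)$ and $x$, so validity persists under completion.

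\emph{Lower bound $\C(f)\ge n$.} Take the input whose entries are all $(1,1)$. Every pair of distinct rows is then a clean matching pair, so this is a $1$-input. Suppose $\rho$ is consistent with it and has fewer than $n$ fixed bits. Then some column $c$ has no fixed coordinate at all. Complete $\rho$ by setting every entry in column $c$ to $(0,0)$. In this completion every row is bad, so no clean matching pair exists and the completion is not a $1$-input. Hence every $1$-certificate for this input has size at least $n$, so $\C_1(f)\ge n$ and $\C(f)\ge n$.
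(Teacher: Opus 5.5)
Your proof is correct and essentially the paper's. The $0$-certificate (kill-column bits plus the two canonical zeros of each of the at most $L$ surviving pairs, bounded via Lemma~\ref{lem:sparsity}) is identical to the paper's. Your $1$-certificate is a leaner version of reading the $\ell+2$ rows: one witnessing bit per entry, giving $(\ell+2)n$ bits instead of $2n(\ell+2)$. The lower bound uses the same all-$(1,1)$ input and the same untouched column.

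The one real difference is where the lower bound stops. You only show that the column-killed completion is not a $1$-input. That suffices, because the paper's $1$-certificates are $\{1\}$-certificates. The paper additionally takes $A$ to be a single repeated row, so that $(j,A)$ is valid and the completion is an actual $0$-input. That stronger conclusion would also hold under the convention where a certificate need only exclude the opposite value.
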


\begin{proof}
\emph{Upper bound.}  If $f(x)=1$, fix a clean matching pair $(i_1,i_2)$ and read
the two matching rows and their $\ell$ associated rows.  These $\ell+2$ rows remain non-bad and
the first two remain matching under every completion, so they form a $1$-certificate of size
$\le 2n(\ell+2)=\Ot(n)$.  For a $0$-input $x$, fix a valid descriptor $(j,A)$
and let $\rho$ read the following \emph{input} bits: for each row $i\notin S_A$, the two
bits of entry $(i,j)$ (which equal $(0,0)$ by \eqref{eq:K}); and for each surviving pair
$(u,v)$ of $S_A$, i.e.\ each pair with $E_A(u,v)=1$, the two bits $x_{u,v,1},x_{v,v,2}$ (which
equal $0$ because \eqref{eq:R} forces $Z_{u,v}=1$).  By Lemma~\ref{lem:sparsity} and
Proposition~\ref{prop:membership}, there are at most $L$ such surviving pairs, so this reads
$\le 2n+2L=\Ot(n)$ bits.  Any completion $x'$ of $\rho$ keeps the \emph{same} certificate
descriptor $(j,A)$ valid: condition \eqref{eq:K} holds because the read $(0,0)$ entries persist;
condition \eqref{eq:R} holds because $E_A$ depends only on $A$, and every
surviving pair ($E_A=1$) has its witness bits forced to $Z=1$ by $\rho$, so every summand of
$W(A,x')$ vanishes.  Hence $f(x')=0$, so $\rho$ is a $0$-certificate and
$\C_0(f)\le 2n+2L=2n(\ell+1)$, which is dominated by the $1$-certificate bound.

\emph{Lower bound.}  Let $\one^{2n^2}$ be the all-ones input.  Every two rows match
and no row is bad, so $f(\one^{2n^2})=1$.  Suppose a partial assignment $\rho$ consistent with
this input fixes fewer than $n$ bits.  Some column $j$ is untouched.  Complete $\rho$ by setting
every entry in column $j$ to $(0,0)$, and let $A$ consist of $M$ repetitions of one fixed row.
Then $(K)$ holds, while $|S_A|=1$ implies that no distinct pair has $E_A=1$, so $(R)$ holds.
Thus $\rho$ has a $0$-completion and cannot be a $1$-certificate.  Hence
$\C(f)\ge\C_1(f,\one^{2n^2})\ge n$.
\end{proof}

\begin{proposition}[Persistent canonical certificates]\label{prop:persistence}
Each descriptor $y$ determines a set $I_b(y)$ of at most
\[
  w:=\max\{2n+2L,\ 2n(\ell+2)\}=2n(\ell+2)=O(n\log n)
\]
input coordinates such that, whenever $\varphi_b(x,y)=1$, the same descriptor is accepted on
every input agreeing with $x$ on $I_b(y)$.  In particular, those coordinates form an ordinary
$b$-certificate.
\end{proposition}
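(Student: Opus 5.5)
We define $I_b(y)$ explicitly from the descriptor alone, separately for $b=1$ and $b=0$, mirroring the upper-bound part of Lemma~\ref{lem:certificates}. The one point that needs care is that the set must be specified purely in terms of $y$ (not $x$), and that its size bound must hold for \emph{every} descriptor, including unsorted or malformed row lists. Acceptance is only assumed when $\varphi_b(x,y)=1$, but the size bound is required unconditionally.

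For $b=1$ and $y=(i_1,i_2)$, let $I_1(y)$ consist of all $2n$ bits of each of the rows $x_{i_1},x_{i_2},x_{r_1(i_1,i_2)},\dots,x_{r_\ell(i_1,i_2)}$. This is at most $2n(\ell+2)$ coordinates and is determined by $y$ through the fixed maps $r_k$. Suppose $\varphi_1(x,y)=1$ and $x'$ agrees with $x$ on $I_1(y)$. Then $i_1\ne i_2$ is a property of $y$. Matching of rows $i_1,i_2$ and non-badness of the associated rows depend only on the bits of those rows, which are unchanged. Hence $\varphi_1(x',y)=1$.

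For $b=0$ and $y=(j,A)$, we compute $S_A$ and $E_A$ from $A$ alone. Let $I_0(y)$ consist of:
\begin{itemize}
\item the bits $x_{i,j,1},x_{i,j,2}$ for each $i\notin S_A$;
\item the bits $x_{u,v,1},x_{v,v,2}$ for each ordered pair $u\ne v$ with $E_A(u,v)=1$.
\end{itemize}
By Proposition~\ref{prop:membership}, $|S_A|\le M\le M_0$ for every list $A$, sorted or not. Lemma~\ref{lem:sparsity} then applies to $S=S_A$. The pairs with $E_A(u,v)=1$ are exactly surviving pairs of $S_A$, so there are at most $L$ of them. This gives $|I_0(y)|\le 2n+2L\le w$, the final inequality holding because $L=\ell n$.

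Now suppose $\varphi_0(x,y)=1$ and $x'$ agrees with $x$ on $I_0(y)$.
\begin{itemize}
\item Condition $(K)$ holds on $x'$. For $i\in S_A$ it is automatic. For $i\notin S_A$, the entry $(i,j)$ is copied from $x$, where $(K)$ forced it to be $(0,0)$.
\item Condition $(R)$ holds on $x'$. Each summand $E_A(u,v)(1-Z_{u,v})$ is nonnegative and Boolean, and $(R)$ on $x$ forces $Z_{u,v}(x)=1$ whenever $E_A(u,v)=1$. These are exactly the two bits copied into $x'$, so $Z_{u,v}(x')=1$ as well. Every summand of $W(A,x')$ therefore vanishes.
\end{itemize}
Hence $\varphi_0(x',y)=1$.

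For the ``in particular'' clause, take any such $x'$. It satisfies $\varphi_b(x',y)=1$, so $f(x')=b$ by Lemma~\ref{lem:A}. Therefore the restriction of $x$ to $I_b(y)$ is a $b$-certificate.

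The only potential obstacle is the uniform size bound for adversarial $A$. This is exactly why the paper states Lemma~\ref{lem:sparsity} uniformly over all $S$ with $|S|\le M_0$, and proves the robust bound $|S_A|\le M$ in Proposition~\ref{prop:membership}. With those two results available, the argument is routine.
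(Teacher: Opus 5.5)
Your proof is correct and follows the paper's argument: it uses the same descriptor-determined sets $I_0(j,A)$ and $I_1(i_1,i_2)$, the same size bounds via Proposition~\ref{prop:membership} and the uniform form of Lemma~\ref{lem:sparsity}, and the same appeal to Lemma~\ref{lem:A} for the ``in particular'' clause. Your explicit point that the size bound must hold even for unsorted or malformed row lists is a useful clarification the paper leaves implicit.
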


\begin{proof}
For $y=(j,A)$, take the two bits in column $j$ for every $i\notin S_A$ and the two witness bits
$x_{u,v,1},x_{v,v,2}$ for every pair with $E_A(u,v)=1$.  There are at most $2n+2L$ such
coordinates, and fixing them preserves $(K)$ and $(R)$.  For $y=(i_1,i_2)$, take all bits in
the two selected rows and their $\ell$ associated rows.  These at most $2n(\ell+2)$
coordinates preserve matching and non-badness.  Lemma~\ref{lem:A} then turns either persistent
witness into an ordinary certificate.
\end{proof}

\subsection{Quadratic hardness at the diagonal input}\label{sec:hardness}

The hard input is the \emph{diagonal input} $z\in\bn^{2n^2}$, defined by $z_{i,i}=(1,0)$ for all
$i\in[n]$ and $z_{i,j}=(0,1)$ for $i\ne j$.  Every entry of $z$ is $(1,0)$ or $(0,1)$---never
$(0,0)$---so $z$ has no bad row, and the diagonal entries keep any two distinct rows from matching.

\begin{lemma}[Quadratic co-certificate hardness]\label{lem:B}
The diagonal input satisfies $z\in f^{-1}(*)$ and
\[
  \Cb{1}(f,z)\ \ge\ \tfrac{n(n-1)}2,\qquad
  \Cb{0}(f,z)\ \ge\ nM\ =\ (\tfrac12+o(1))n^2 .
\]
In particular
$\min\{\Cb{0}(f,z),\Cb{1}(f,z)\}=\Omega(n^2)$.
\end{lemma}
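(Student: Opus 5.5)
The plan has three parts, all working directly from the diagonal input $z$ and the explicit definitions: show $z\in f^{-1}(*)$, then prove each co-certificate bound with an adversary that completes an arbitrary small partial assignment $\rho$ consistent with $z$ into an input of the forbidden value.

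\emph{$z$ is a $*$-input.} First, $z$ has no clean matching pair. For distinct $u,v$, at column $u$ the entries are $z_{u,u}=(1,0)$ and $z_{v,u}=(0,1)$, which do not match. Second, no descriptor $(j,A)$ is valid on $z$. Since $z$ has no $(0,0)$ entry, condition \eqref{eq:K} would force $S_A=[n]$. But Proposition~\ref{prop:membership} gives $|S_A|\le M<n$, a contradiction.

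\emph{Bound on $\Cb{1}$.} Let $\rho$ be consistent with $z$, and let $x$ be the completion that sets every free bit to $1$. Since $z$ has no $(0,0)$ entries and we only add ones, no row of $x$ is bad, so every pair of distinct rows whose rows match is automatically clean. For distinct $u<v$, all columns $c\notin\{u,v\}$ match via the second coordinates, since $z_{u,c}=z_{v,c}=(0,1)$. Column $u$ matches in $x$ unless both $x_{u,u,2}$ and $x_{v,u,1}$ are fixed by $\rho$, and column $v$ matches unless both $x_{u,v,1}$ and $x_{v,v,2}$ are fixed. So if $\rho$ rules out $1$, then for every unordered pair $\{u,v\}$ it fixes at least one of the off-diagonal bits $x_{v,u,1}$, $x_{u,v,1}$. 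These bits are distinct for different pairs, so $|\rho|\ge n(n-1)/2$.

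\emph{Bound on $\Cb{0}$.} Suppose $\rho$ is consistent with $z$ and $|\rho|<nM$. We exhibit a completion on which some descriptor is valid.

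1. By averaging over columns, some column $j$ has fewer than $M$ rows $i$ such that $\rho$ fixes a bit of entry $(i,j)$. Let $S$ be this set of touched rows (padding it with untouched rows is optional), so $|S|\le M$.

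2. Let $x$ agree with $z$ everywhere except that $x_{i,j}:=(0,0)$ for each $i\notin S$. These entries are untouched by $\rho$, so $x$ is consistent with $\rho$.

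3. Let $A$ be the sorted list of $S$, padded to length $M$ by repeating an element. By Proposition~\ref{prop:membership}, $S_A=S$, so \eqref{eq:K} holds.

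4. For \eqref{eq:R}, observe that $Z_{u,v}=1$ for \emph{every} pair $u\ne v$ already in $z$: $z_{u,v,1}=0$ because $z_{u,v}=(0,1)$, and $z_{v,v,2}=0$ because $z_{v,v}=(1,0)$. Our modification only turns bits to $0$, so $Z_{u,v}=1$ still holds in $x$. Hence every summand of $W(A,x)$ vanishes, $W(A,x)=0$, and $f(x)=0$.

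Therefore $\Cb{0}(f,z)\ge nM=(\tfrac12+o(1))n^2$.

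\emph{Main obstacle.} The delicate step is the $\Cb{1}$ bound: we must identify, for each pair, a bit that $\rho$ is forced to fix and that is charged to that pair alone. The clean way to do this is the all-ones completion, which reduces the question of whether $\rho$ blocks a pair to the four bits in columns $u$ and $v$ described above. By contrast, the $\Cb{0}$ bound is easy once one notices that $z$ already spoils every pair canonically, so the sparsity bound of Lemma~\ref{lem:sparsity} is not even needed there.
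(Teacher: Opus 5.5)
Your proof is correct and follows the paper's skeleton. The three parts match: the same argument that $z$ is a $*$-input; the same disjoint per-pair charge to $\{x_{u,v,1},x_{v,u,1}\}$ for $\Cb{1}$; and the same column-averaging and kill-column descriptor for $\Cb{0}$. For $\Cb{1}$ you fill all free bits with ones, where the paper flips only $z_{a,b,1}$ and $z_{b,a,1}$; the two choices are equivalent.

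The genuine difference is how you verify \eqref{eq:R}. The paper zeros the two canonical spoiling bits for each surviving pair in $E(S)$, fills the remaining bits arbitrarily, and cites Lemma~\ref{lem:sparsity}. You instead complete with $z$ itself and observe that $Z_{u,v}=1$ already holds on $z$ for every $u\ne v$, and that killing entries only turns bits to $0$. So $W(A,x)=0$ for every $A$, and no sparsity is needed. Your route is shorter and shows that the hardness at $z$ comes entirely from \eqref{eq:K}. The paper's route permits an arbitrary fill and mirrors the argument of \cite{Bal}. It exhibits the $0$-completion as carrying a short canonical certificate (kill entries plus at most $L$ spoiled pairs), which is the point stressed in Remark~\ref{rem:holeE}. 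Even there, the bound $|E(S)|\le L$ plays no role in the validity check itself.

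One small repair is needed in your step~3. The set $S$ may be empty, if column $j$ is entirely untouched, and then ``repeating an element'' is undefined. In that case take $A$ to be $M$ copies of any row, as the paper does; \eqref{eq:K} still holds because every row is killed. When $S\ne\varnothing$, pad so that $A$ stays non-decreasing, for example by repeating its largest element. This keeps Proposition~\ref{prop:membership} applicable, so $S_A=S$.
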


\begin{proof}

\emph{$z$ is a $*$-input.}  As noted above, $z$ has no bad row and no two distinct rows match, so
no clean matching pair exists and $f(z)\ne1$.  Also $f(z)\ne0$: a valid canonical
$0$-descriptor requires, by \eqref{eq:K}, that every row $i\notin S_A$ be bad in column $j$; but
$z$ has no bad rows and $|S_A|\le M<n$, so some $i\notin S_A$ would have to be bad---impossible.
Hence $f(z)=*$.

\emph{$\Cb{1}(f,z)\ge n(n-1)/2$.}  Let $\rho$ be any partial assignment consistent with $z$
that rules out $f=1$.  For each unordered pair $\{a,b\}$, the rows $z_a,z_b$ can be
made matching by flipping $z_{a,b,1}$ and $z_{b,a,1}$ from $0$ to $1$; no bad row is created, so
unless $\rho$ fixes at least one of the two bits, this gives a $1$-completion of $\rho$.
Thus $\rho$ must read at least one of
$z_{a,b,1},z_{b,a,1}$ for every unordered pair $\{a,b\}$.  These obligations are disjoint, so
$\Cb{1}(f,z)\ge\binom n2=n(n-1)/2$.

\emph{$\Cb{0}(f,z)\ge nM$.}  Let
$\rho$ be consistent with $z$ with $|\rho|<h:=nM$.  We exhibit a
completion $x'$ of $\rho$ with $f(x')=0$, proving $\rho$ does not rule out $0$.  By averaging
over the $n$ columns, there is a column $j$ in which $\rho$ fixes fewer than $M$ bits (and hence
touches entries in at most $M$ rows); let $S\subseteq[n]$ be the rows whose
column-$j$ entry $\rho$ touches.  Form $x'$ from $\rho$ as follows:
\begin{itemize}
  \item For every row $i\notin S$, set $x'_{i,j}:=(0,0)$.  This is consistent with $\rho$
  because $\rho$ does not touch entry $(i,j)$ for $i\notin S$ (by definition of $S$).
  \item Let $E(S):=\{(u,v):u\ne v,\ u,v\in S,\ \forall k\ r_k(u,v)\in S\}$.  By
  Lemma~\ref{lem:sparsity}, $|E(S)|\le L$.  For each $(u,v)\in E(S)$ set
  $x'_{u,v,1}:=0$ and $x'_{v,v,2}:=0$.  This is consistent with $\rho$: since $\rho$ is
  consistent with $z$ and $z_{u,v,1}=0$ (as $u\ne v$) and $z_{v,v,2}=0$, the bits
  $x_{u,v,1},x_{v,v,2}$ are, wherever $\rho$ fixes them, fixed to $0$.
  \item Fix all remaining unset bits of $x'$ arbitrarily (consistently with $\rho$).
\end{itemize}
Now take the descriptor $(j,A)$, where $A$ is the sorted list of $S$, padded to length $M$ by
repeating its largest element when $S\ne\varnothing$---so that $A$ is again sorted---and is $M$
repetitions of row $1$ when $S=\varnothing$.  Since $A$ is sorted in both cases,
Proposition~\ref{prop:membership} gives $S_A=S$ unless $S=\varnothing$, in which case
$S_A=\{1\}$.
Condition \eqref{eq:K} holds because every
row outside the original set $S$ is bad in column $j$ by construction.  Condition \eqref{eq:R}
also holds: if $S=\varnothing$, then no distinct ordered pair has $E_A=1$; otherwise, a pair
$(u,v)$ with $E_A(u,v)=1$ lies in $E(S)$, and the witness bits give $Z_{u,v}=1$.
Hence every summand of $W$ is zero.  Therefore $(j,A)$ is valid on $x'$ and $f(x')=0$.
As $|\rho|<h$ was arbitrary,
$\Cb{0}(f,z)\ge h=nM=(\tfrac12+o(1))n^2$.
\end{proof}

\begin{remark}[Restricting to canonical descriptors is free]\label{rem:holeE}
A priori, replacing the $0$-side of $f'$ by only canonical descriptors could only
\emph{lower} $\Cb{0}(f,z)$: it makes $0$-inputs rarer and could make ``$\rho$ rules out $0$''
easier.  The reason it does not is that the certificate produced in the $\Cb{0}$ proof of
\cite[Claim~1]{Bal}
can be chosen in our canonical form: a kill column and a row set determine at most $L$
surviving pairs by Lemma~\ref{lem:sparsity}.  The killed entries are untouched by $\rho$, while
the canonically positioned spoiling bits are forced to zero by consistency with $z$.  Thus the
$\alpha=2$ hardness survives the restriction to this structured descriptor family.  Moreover,
the extracted $0$-certificate in Lemma~\ref{lem:certificates} has size at most
$2n+2L=(2\ell+2)n$ and rules out every clean matching pair by Lemma~\ref{lem:A}; hence
$f^{-1}(0)\subseteq(f')^{-1}(0)$.
\end{remark}

\section{Low-degree verification}\label{sec:certifier}

This section proves the main new ingredient.  Lemma~\ref{lem:totalization} reduces the
approximate degree of the final total function to that of the witness predicates defined in
Section~\ref{sec:function}:
\[
  \varphi_0\bigl(x,(j,A)\bigr)=[\,(K)\wedge(R)\,],
  \qquad
  \varphi_1\bigl(x,(i_1,i_2)\bigr)=[\,(i_1,i_2)\text{ is a clean matching pair}\,].
\]

\begin{lemma}[Low-degree verifiers]\label{lem:certifier}
$\adeg(\varphi_0)=\Ot(\sqrt n)$ and $\adeg(\varphi_1)=\Ot(\sqrt n)$.
\end{lemma}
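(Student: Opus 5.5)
For $\varphi_0$ the plan is to build one integer-valued deficiency polynomial $T$ of polylogarithmic degree and range $\Ot(n)$ that vanishes exactly when $(K)\wedge(R)$ holds, and then apply Corollary~\ref{cor:counting}.

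\emph{Degrees of the pieces.} By Proposition~\ref{prop:membership}, each $\mem_A(i)$ is computed exactly by a polynomial of degree $O(\log^2 n)$ in the descriptor bits. For every fixed pair $(u,v)$ the associated rows $r_k(u,v)$ are fixed indices, so $E_A(u,v)$ is a product of $\ell+2$ membership indicators. Its degree is therefore $O(\log^3 n)$, and it is Boolean-valued. Multiplying by $1-Z_{u,v}$ adds degree $2$. Summing over $u\ne v$ does not raise the degree, so $W$ has degree $O(\log^3 n)$.

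The kill column $j$ is part of the descriptor, so for each row $i$ I write
\[
\kappa_i:=\sum_{j'\in[n]}[j=j']\,(1-\mem_A(i))\,\bigl(1-(1-x_{i,j',1})(1-x_{i,j',2})\bigr).
\]
Here $[j=j']$ is an exact indicator of degree $\log n$ on the $j$-bits, and exactly one $j'$ contributes. Hence $\kappa_i\in\{0,1\}$, $\kappa_i=1$ iff row $i$ violates $(K)$, and $\deg\kappa_i=O(\log^2 n)$. Setting $T:=W+\sum_i\kappa_i$ gives a nonnegative integer polynomial of degree $\polylog(n)$ with $T=0$ iff $(K)\wedge(R)$.

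\emph{Range bound (the crux).} I must show $W\le L$ on every Boolean input, including malformed descriptors with unsorted lists or duplicate names. First, $W$ is at most the number of ordered pairs $u\ne v$ with $E_A(u,v)=1$. Second, $E_A(u,v)=1$ means $u$, $v$ and every $r_k(u,v)$ lie in $S_A$. Third, Proposition~\ref{prop:membership} gives $|S_A|\le M\le M_0$ whatever the list contents. Lemma~\ref{lem:sparsity} then bounds the count of such pairs by $L=n\log n$, in the uniform form that holds for every $S$. This robustness is exactly why the membership test is defined by binary search rather than by naive list scanning, and it is the step I expect to require the most care. I will check explicitly that nothing beyond $|S_A|\le M$ is used. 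Consequently $T\le L+n=O(n\log n)$.

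\emph{Conclusion for $\varphi_0$.} Corollary~\ref{cor:counting} with range $L+n$ gives a polynomial of degree $O(\sqrt{n\log n})\cdot\polylog(n)=\Ot(\sqrt n)$ that $\tfrac13$-approximates $[T=0]=\varphi_0$.

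\emph{Verifier $\varphi_1$.} The descriptor names $(i_1,i_2)$, and I expand over all choices by selector indicators:
\[
\varphi_1=\sum_{a\ne b}[i_1=a][i_2=b]\,h_{a,b}(x).
\]
For fixed $(a,b)$, $h_{a,b}$ is the AND of the $n$ column-matching checks for rows $a,b$ and the $\ell n$ non-badness checks for the associated rows. Each check is exact of degree $\le2$, so by Lemma~\ref{lem:robust}(3), $h_{a,b}$ has an approximant of degree $O(\sqrt{n\log n})$. Equivalently, I can form the count of failed checks, which is at most $(\ell+1)n$, and apply Corollary~\ref{cor:counting}.

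To avoid error accumulation across the $n^2$ summands, I count the failed checks directly as a polynomial of degree $O(\log n)$: the selector-weighted sum of the failure counts of the selected pair, plus the indicator $[i_1=i_2]$. This polynomial has range $O(n\log n)$, and I zero-test it with Corollary~\ref{cor:counting}. Since exactly one selector is active, the result is again a polynomial of degree $\Ot(\sqrt n)$.
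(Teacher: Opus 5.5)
Your proposal is correct and follows essentially the paper's proof. For $\varphi_0$ you use the same deficiency $T=W+\sum_i(\text{$(K)$-violations})$ of polylogarithmic degree, the same range bound $W\le L$ from $|S_A|\le M$ plus the uniform surviving-pairs lemma, and a single zero-test of degree $O(\sqrt{n+L})\cdot\polylog(n)$. For $\varphi_1$ your zero-test on the failed-check count is exactly what Lemma~\ref{lem:robust}(3) unpacks to.

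There are only harmless slips:
\begin{itemize}
\item The matching check has multilinear degree $4$, not $2$.
\item There is no error accumulation to avoid in the selector sum: exactly one selector is active, so plugging the approximants $p_{a,b}$ into it already works.
\item Binary search is chosen for its $O(\log^2 n)$ exact degree, not for robustness. A naive scan would also give $|S_A|\le M$, but it has degree at least $M$. Robustness on unsorted lists is something to verify, not the reason for the choice.
\end{itemize}
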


The proof occupies the rest of the section.  Write $d_0:=c\log^2 n$ for the exact degree
bound of Proposition~\ref{prop:membership} (some constant $c$).

\subsection{The count \texorpdfstring{$W$}{W} is low-degree and bounded}

\begin{proposition}\label{prop:Wdeg}
After multilinearizing on the Boolean cube, $W$ is a polynomial in the input bits $x$ and the
descriptor bits of $A$ with
$\deg(W)=O(\ell\log^2 n)=O(\log^3 n)=\polylog(n)$.
\end{proposition}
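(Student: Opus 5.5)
The plan is to bound the degree of each summand of $W$ separately and then use that degree is subadditive under sums and multiplicative-additive under products. Since $W$ is a sum over the fixed index set $\{(u,v):u\ne v\}$, where $u$, $v$ and the maps $r_k$ are fixed and not read from the input, we have $\deg(W)\le\max_{u\ne v}\deg\bigl(E_A(u,v)(1-Z_{u,v})\bigr)$. The number of summands, $\Theta(n^2)$, plays no role.

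Fix an ordered pair $(u,v)$ with $u\ne v$. The indices $r_1(u,v),\dots,r_\ell(u,v)$ are then fixed elements of $[n]$, so each factor $\mem_A(u)$, $\mem_A(v)$, $\mem_A(r_k(u,v))$ is $\mem_A(i)$ for a fixed query $i$. By Proposition~\ref{prop:membership}, each such factor is computed exactly by a multilinear polynomial of degree at most $d_0=c\log^2 n$ in the descriptor bits of $A$. Hence $E_A(u,v)$ is a product of $\ell+2$ polynomials of degree at most $d_0$, so it is a polynomial of degree at most $(\ell+2)d_0$. The factor $1-Z_{u,v}=1-(1-x_{u,v,1})(1-x_{v,v,2})$ has degree $2$ in the input bits and involves no descriptor bits. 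Each summand therefore has degree at most $(\ell+2)d_0+2$.

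Summing over the pairs and then multilinearizing on the Boolean cube gives the final bound. Reducing $y^2\mapsto y$ never increases degree, and the result agrees with $W$ on every Boolean input. This yields $\deg(W)\le(\ell+2)c\log^2 n+2=O(\log^3 n)$ with $\ell=\log n$.

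The only point needing care is that the bound must hold uniformly over arbitrary, possibly unsorted descriptor contents. That is exactly what Proposition~\ref{prop:membership} provides, because its decision-tree bound holds "sorted or not". A second point is that the maps $r_k$ are fixed in advance by Lemma~\ref{lem:sparsity}, so they introduce no dependence on the input. With those two observations, the rest is routine degree bookkeeping.
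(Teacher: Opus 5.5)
Your proposal is correct and follows the paper's proof essentially line for line. You apply Proposition~\ref{prop:membership} to each of the $\ell+2$ membership factors at fixed indices to get degree $(\ell+2)d_0$ for $E_A(u,v)$, add $2$ for the witness factor, and note that summing over pairs and multilinearizing do not raise the degree.
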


\begin{proof}
Fix $u\ne v$.  Proposition~\ref{prop:membership} gives exact degree-$d_0$ polynomials for
$\mem_A(u)$, $\mem_A(v)$, and each $\mem_A(r_k(u,v))$.  Here every $r_k(u,v)$ is a fixed
index.  The product defining $E_A(u,v)$ has $\ell+2$ such factors and therefore degree at most
$(\ell+2)d_0$.
Since $Z_{u,v}=(1-x_{u,v,1})(1-x_{v,v,2})$ has degree $2$, the summand
$E_A(u,v)(1-Z_{u,v})$ has degree at most
$(\ell+2)d_0+2=O(\ell\log^2 n)$.  Summing the $n^2-n$ summands does not increase the
degree.  If a product creates repeated variables, replacing each $y^r$ by $y$ gives the
unique multilinear polynomial agreeing with it on Boolean inputs and cannot increase degree.
Therefore the multilinear representative of $W$ has degree $O(\ell\log^2 n)=O(\log^3 n)$.
\end{proof}

\begin{proposition}\label{prop:Wbound}
For \emph{every} Boolean assignment to $x$ and $A$---including a malformed row list---we have
$W(A,x)\in\{0,1,\dots,L\}$.
\end{proposition}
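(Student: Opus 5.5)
Since $W$ is, on Boolean inputs, a sum of nonnegative Boolean terms, we will show it is a nonnegative integer and then bound it above by the number of surviving pairs of the set $S_A$.

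\emph{Integrality and nonnegativity.} Fix any Boolean assignment to $x$ and to the descriptor bits of $A$. By Proposition~\ref{prop:membership}, each $\mem_A(\cdot)$ is a Boolean function of the descriptor bits, and on Boolean inputs its multilinear representative returns exactly that Boolean value. Hence $E_A(u,v)\in\bn$. Since $x$ is Boolean, $Z_{u,v}\in\bn$, so $1-Z_{u,v}\in\bn$. Each summand $E_A(u,v)(1-Z_{u,v})$ is therefore in $\bn$, and $W(A,x)$ is a nonnegative integer. Multilinearization does not change values on the cube, so this holds for the polynomial representative as well.

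\emph{Upper bound.} Using $1-Z_{u,v}\le 1$, we get
\[
W(A,x)\ \le\ \sum_{u\ne v}E_A(u,v)\ =\ \bigl|\{(u,v)\in S_A\times S_A:\ u\ne v,\ \forall k\ r_k(u,v)\in S_A\}\bigr| .
\]
This holds because $E_A(u,v)=1$ exactly when $u$, $v$ and every $r_k(u,v)$ lie in $S_A=\{i:\mem_A(i)=1\}$. This is the step that needs the robustness clause of Proposition~\ref{prop:membership}: even when $A$ is malformed (unsorted or with repeated entries), $S_A$ is a genuine set of size at most $M\le M_0$. Lemma~\ref{lem:sparsity} holds uniformly for \emph{every} $S$ with $|S|\le M_0$, so applying it to $S=S_A$ bounds the count over all of $S_A\times S_A$ by $L$. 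The count restricted to $u\ne v$ is no larger, so $W(A,x)\le L$.

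The only potential obstacle is that a malformed list might have pushed $|S_A|$ above $M$ and taken us outside the hypothesis of Lemma~\ref{lem:sparsity}. The charging argument in Proposition~\ref{prop:membership} rules this out, so no case split on the shape of $A$ is needed. Combining the two parts gives $W(A,x)\in\{0,1,\dots,L\}$.
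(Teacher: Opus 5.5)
Your proof is correct and follows essentially the same route as the paper's. Each summand is Boolean, so $W$ is a nonnegative integer; dropping the factor $1-Z_{u,v}\le 1$ bounds $W$ by the number of distinct surviving pairs of $S_A$; and the list-independent guarantee $|S_A|\le M\le M_0$ from Proposition~\ref{prop:membership} lets you apply the uniform Lemma~\ref{lem:sparsity} to get $W\le L$.
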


\begin{proof}
Each summand is a product of $\bn$-valued quantities, hence lies in $\{0,1\}$; so $W$ is a
non-negative integer and $W\le \#\{(u,v):u\ne v,\ E_A(u,v)=1\}$.  Now $E_A(u,v)=1$ forces
$u,v\in S_A$ and $r_k(u,v)\in S_A$ for all $k$.  By Proposition~\ref{prop:membership},
$|S_A|\le M$, so Lemma~\ref{lem:sparsity} (applied to the set $S=S_A$) gives
\[
  \#\{(u,v):u\ne v,\ E_A(u,v)=1\}\ \le\
  \bigl|\{(u,v)\in S_A\times S_A:\forall k\ r_k(u,v)\in S_A\}\bigr|\ \le\ L .
\]
Hence $0\le W\le L$.
\end{proof}

\noindent
Proposition~\ref{prop:Wbound} is the linchpin.  We do not need to test whether the row list is
sorted: binary-search membership guarantees a bounded accepted set even for an arbitrary,
unsorted descriptor.  Thus $W\le L$ holds on the entire Boolean cube, as required for the
univariate zero-test.

\subsection{A single zero-test for validity}

\begin{proposition}[The $0$-verifier]\label{prop:zerotest}
There is a single polynomial of degree $O(\sqrt{n+L}\,\log^3 n)=\Ot(\sqrt n)$ that
$\tfrac13$-approximates $\varphi_0$; hence $\adeg(\varphi_0)=\Ot(\sqrt n)$.
\end{proposition}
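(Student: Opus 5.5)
We combine the two validity conditions into a single nonnegative integer-valued deficiency polynomial and apply Corollary~\ref{cor:counting} once. For condition \eqref{eq:K} we define, for each row $i\in[n]$, the kill-failure indicator
\[
  D_i(A,x)\ :=\ \bigl(1-\mem_A(i)\bigr)\cdot\bigl(1-(1-x_{i,j,1})(1-x_{i,j,2})\bigr),
\]
which equals $1$ exactly when row $i$ is not accepted by $A$ and its column-$j$ entry is not $(0,0)$. The kill column $j$ is itself part of the descriptor, given by $\log n$ bits. So we first write $x_{i,j,b}=\sum_{j'\in[n]}[\,j=j'\,]\,x_{i,j',b}$, where $[\,j=j'\,]$ is an exact polynomial of degree $\log n$ in the bits of $j$. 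This makes $D_i$ an exact $\bn$-valued polynomial of degree $O(d_0+\log n)=O(\log^2 n)$.

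Next we set
\[
  T(j,A,x)\ :=\ \sum_{i=1}^n D_i(A,x)\;+\;W(A,x).
\]
Every summand is $\bn$-valued on the Boolean cube, so $T$ is a nonnegative integer. By Proposition~\ref{prop:Wbound}, which holds for arbitrary (even malformed) lists $A$, we have $T\le n+L$ everywhere. Moreover $T=0$ iff every $D_i=0$ and $W=0$, that is, iff \eqref{eq:K} and \eqref{eq:R} both hold. So $T=0$ is exactly the event $\varphi_0=1$.

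For the degree, summing does not increase degree. Proposition~\ref{prop:Wdeg} gives $\deg W=O(\log^3 n)$, and each $D_i$ has degree $O(\log^2 n)$. Hence the multilinear representative of $T$ has degree $d_T=O(\log^3 n)$.

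Applying Corollary~\ref{cor:counting} with range bound $n+L$ shows that $q_{n+L}\circ T$ is a polynomial of degree
\[
  O\bigl(\sqrt{n+L}\bigr)\cdot O(\log^3 n)\ =\ O\bigl(\sqrt{n+L}\,\log^3 n\bigr)
\]
that $\tfrac13$-approximates $[\,T=0\,]=\varphi_0$ on all Boolean inputs. Since $L=n\log n$, this degree is $\Ot(\sqrt n)$.

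The only delicate point is guaranteeing that the range bound holds on the entire Boolean cube, including adversarial descriptors. Otherwise the zero-test could blow up at some input where $T$ exceeds $n+L$. The $D_i$ part is trivially bounded by $n$, and the $W$ part is exactly what Proposition~\ref{prop:Wbound} supplies. Beyond that, we only need to check that the descriptor-dependent column lookup is an exact low-degree polynomial, which the selector expansion above provides.
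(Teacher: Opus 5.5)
Your proposal is correct and is essentially the paper's proof: your $D_i$ is exactly the paper's $1-\kappa_i$, and you form the same deficiency $T=\sum_i(1-\kappa_i)+W$, bound its range by $n+L$ via Proposition~\ref{prop:Wbound}, and apply Corollary~\ref{cor:counting} once. The only cosmetic difference is that you build the exact degree-$O(\log^2 n)$ polynomial for the kill check by an explicit column-selector expansion, whereas the paper obtains it from a depth-$O(\log^2 n)$ decision tree.
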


\begin{proof}
For each fixed $i\in[n]$, the local check
$\kappa_i:=\bigl[\mem_A(i)=1\bigr]\vee\bigl[(x_{i,j,1},x_{i,j,2})=(0,0)\bigr]$ is computed
\emph{exactly} by a decision tree of depth $O(\log^2 n)$: run the binary search for $i$ in $A$
($O(\log^2 n)$ probes of descriptor bits); separately read the $\log n$ bits of
the kill column $j$ and then the two input bits $x_{i,j,1},x_{i,j,2}$ ($O(\log n)$ probes);
output $\mem_A(i)\vee\overline{x_{i,j,1}}\,\overline{x_{i,j,2}}$.  Hence $\kappa_i$ has an exact
polynomial of degree $O(\log^2 n)$.  Define the total deficiency
\[
  T(j,A,x)\ :=\ \sum_{i=1}^n(1-\kappa_i)+W(A,x).
\]
On every Boolean input, $T$ is an integer in $\{0,\dots,n+L\}$ and has multilinear degree
$O(\log^3 n)$ by Proposition~\ref{prop:Wdeg}.  Because all its summands are nonnegative,
\[
  T=0\quad\Longleftrightarrow\quad (K)\wedge(R)
  \quad\Longleftrightarrow\quad \varphi_0=1.
\]
Corollary~\ref{cor:counting}, applied once to $T$, therefore gives a polynomial
$q_{n+L}\circ T$ that $1/3$-approximates $\varphi_0$ and has degree
$O(\sqrt{n+L}\,\log^3 n)$, which is $\Ot(\sqrt n)$ because $L=n\log n$.
\end{proof}

\noindent
Note that a single application of Corollary~\ref{cor:counting} suffices: the $n$ kill-column
checks and the $\Theta(n^2)$ pairwise conditions are merged into one statistic before any
approximation happens, so the degree is charged once, against the range $n+L=\Ot(n)$.

\begin{proof}[Proof of Lemma~\ref{lem:certifier}]
\emph{The $0$-verifier.}  This is Proposition~\ref{prop:zerotest}.

\emph{The $1$-verifier.}  $\varphi_1\bigl(x,(i_1,i_2)\bigr)$ accepts iff $i_1\ne i_2$, the rows
$x_{i_1},x_{i_2}$ match in every column, and no associated row $x_{r_k(i_1,i_2)}$ is bad.  This
is the conjunction of: one inequality check ($O(\log n)$ bits of the certificate); $n$
matching checks, the $j$-th reading the $2\log n$ bits naming $i_1,i_2$ and then
the four input bits $x_{i_1,j,\cdot},x_{i_2,j,\cdot}$ to evaluate
$(x_{i_1,j,1}\wedge x_{i_2,j,1})\vee(x_{i_1,j,2}\wedge x_{i_2,j,2})$; and $\ell n$ non-badness
checks, the $(k,j)$-th reading the name $r_k(i_1,i_2)$ (determined by the fixed map after
reading the $2\log n$ descriptor bits naming $i_1,i_2$) and the two input bits
of entry $(r_k(i_1,i_2),j)$ to evaluate $x_{r_k(i_1,i_2),j,1}\vee x_{r_k(i_1,i_2),j,2}$.  Each
of these $1+n+\ell n=\Ot(n)$ checks is computed exactly by a decision tree of depth
$O(\log n)$, so by Lemma~\ref{lem:robust}(3),
$\adeg(\varphi_1)=O(\sqrt{\ell n})\cdot O(\log n)=\Ot(\sqrt n)$.

Both verifiers have approximate degree $\Ot(\sqrt n)$, proving the lemma.
\end{proof}

\begin{proof}[Proof of Theorem~\ref{thm:base}]
The function and hard input are given by Definition~\ref{def:f} and the definition preceding
Lemma~\ref{lem:B}.  Lemmas~\ref{lem:certificates} and~\ref{lem:B} give item~1.
Lemma~\ref{lem:A} gives item~2, Proposition~\ref{prop:persistence} gives item~3, and
Lemma~\ref{lem:certifier} gives item~4.  Finally the longer descriptor has length
$(1+M)\log n=\Theta(n\log n)$, and padding the $1$-descriptor does not change its predicate
or approximate degree.
\end{proof}

\section{Proof of the main theorem}\label{sec:assembly}

\begin{proof}[Proof of Theorem~\ref{thm:main}]
Apply Lemma~\ref{lem:totalization} to the function, hard input, and witness predicates supplied
by Theorem~\ref{thm:base}.  Their common descriptor length is
$\ell'=(1+M)\log n=\Theta(n\log n)$, and their persistent certificate width satisfies
\[
  w\ \le\ \max\{2n+2L,\ 2n(\ell+2)\}=2n(\ell+2)=O(n\log n).
\]
The base arity is $m=2n^2$.  Choose
\[
  k:=2\log n,\qquad H:=2^k=n^2 .
\]
Let $G_n$ be the resulting total function.  Its input length is
\[
  N_n=km+Hk\ell'=\Theta(n^3\log^2 n)=\Ot(n^3).
\]

\smallskip
\emph{Certificate complexity lower bound.}  By Lemma~\ref{lem:totalization} and
Theorem~\ref{thm:base},
\[
  \C(G_n)\ \ge\ \C_0(G_n)\ \ge\
  \min\{H,\Cb{0}(f,z),\Cb{1}(f,z)\}
  =\Omega(n^2).
\]

\smallskip
\emph{Approximate degree upper bound.}  Theorem~\ref{thm:base} gives
$\adeg(\varphi_0),\adeg(\varphi_1)=\Ot(\sqrt n)$.  Since
$H=n^2=\mathrm{poly}(m)$, Lemma~\ref{lem:totalization} gives
\[
  \adeg(G_n)\ \le\ \Ot\!\left(\max_{b\in\bn}\adeg(\varphi_b)\right)
  \ =\ \Ot(\sqrt n).
\]

\smallskip
\emph{The approximate degree is $\Tht(\sqrt n)$.}  The matching lower bound is free: by the
universal ceiling \eqref{eq:ceiling} and the certificate bound just proved,
$\adeg(G_n)\ge\Omega\bigl(\C(G_n)^{1/4}\bigr)=\Omega(\sqrt n)$.

\smallskip
\emph{The separation.}  Fix constants $c_0>0$ and $c_1,a>0$ with
$\C(G_n)\ge c_0n^2$ and $\adeg(G_n)\le c_1\sqrt n\,\log^{a}n$ for all large $n$, as supplied by
the two displays above.  Then
\[
  \adeg(G_n)^4\ \le\ c_1^4\,n^2\log^{4a}n\ \le\ \frac{c_1^4}{c_0}\,\log^{4a}n\cdot\C(G_n),
\]
that is, $\C(G_n)=\Omt\bigl(\adeg(G_n)^4\bigr)$, proving Theorem~\ref{thm:main}.
\end{proof}

\begin{remark}[The puzzle bookkeeping]\label{rem:puzzle}
Lemma~\ref{lem:totalization} gives $\UC_1(G_n)\le k(\ell'+w)=\Ot(n)$.  Conversely, the universal
quadratic conversion $\C_0\le\UC_1^2$ \cite[Fact~1]{BGJK}, together with
$\C_0(G_n)=\Omega(n^2)$, gives $\UC_1(G_n)=\Omega(n)$.  Thus
$\UC_1(G_n)=\Tht(n)$, so the verifier bound and Lemma~\ref{lem:totalization} yield
\[
  \adeg(G_n)\le\Ot(\sqrt{\UC_1(G_n)}),
\]
which is \eqref{eq:star}.  This supplies the condition that was missing from the
exponent-$2$ puzzle solution.
\end{remark}

\subsection*{Acknowledgments}
This paper was developed with substantial assistance from ChatGPT (GPT-5.5 High),
which identified the approximate-polynomial zero-test
that completes a previously considered proof strategy.
The conversation transcript is available at \cite{Chat}.
AI assistance was used for writing, editing and polishing the paper.
The author verified the correctness and originality of all content including references.

This work was supported by the Latvian Quantum Initiative under the
European Union Recovery and Resilience Facility,
project No. 2.3.1.1.i.0/1/22/I/CFLA/001.

\appendix

\section{Proof of the surviving-pairs lemma}\label{app:sparsity}

\begin{proof}[Proof of Lemma~\ref{lem:sparsity}]
We show that uniformly random maps $r_1,\dots,r_\ell:[n]\times[n]\to[n]$ (each value chosen
independently and uniformly from $[n]$) satisfy the bound for all $S$ simultaneously with positive
probability; a fixed good choice then exists.  Fix $S$ with $|S|=m\le M_0$.  For
$(i,j)\in S\times S$ let $Y_{i,j}:=\prod_{k=1}^\ell\one[r_k(i,j)\in S]$, so that
$\Pr[Y_{i,j}=1]=(m/n)^\ell$; the variables $Y_{i,j}$ are mutually independent, as they depend on
the values of the $r_k$ at distinct points $(i,j)$.  Put $Y^S:=\sum_{(i,j)\in S\times S}Y_{i,j}$,
so
\[
  \E[Y^S]\ =\ m^2\Bigl(\tfrac mn\Bigr)^{\!\ell}\ \le\
  n^{\frac{2\ell+2}{\ell+2}}\cdot n^{-\frac{\ell}{\ell+2}}\ =\ n .
\]
By the multiplicative Chernoff bound $\Pr[X\ge(1+\eps)\mu]\le e^{-\eps^2\mu/(2+\eps)}$,
valid for a sum $X$ of independent $\bn$-valued variables with $\E[X]\le\mu$, applied with
$\mu:=n$ and $\eps:=\ell-1$,
\[
  \Pr[Y^S\ge\ell\cdot n]\ \le\
  \exp\!\Bigl(-\tfrac{(\ell-1)^2}{\ell+1}\,n\Bigr)\ \le\ e^{-(\ell-3)n},
\]
the last step using
$\tfrac{(\ell-1)^2}{\ell+1}=(\ell-3)+\tfrac{4}{\ell+1}\ge\ell-3$.  A union bound over
the at most $2^n$ choices of $S$ gives, for $\ell>4$,
\[
  \Pr\!\left[\exists S\subseteq[n],\ |S|\le M_0:\ Y^S\ge\ell n\right]
  \ \le\ 2^n e^{-(\ell-3)n}\ \le\ e^{-(\ell-4)n}\ <\ 1 .
\]
Thus some fixed maps satisfy $Y^S\le\ell n=L$ for every $S$ with
$|S|\le M_0$ simultaneously.
\end{proof}

\section{Proof of the black-box totalization}\label{app:totalization}

\begin{proof}[Proof of Lemma~\ref{lem:totalization}]
Construct $g$ from the $k$ base strings and $H$ cells exactly as described in the lemma.
Consider the $0$-input with all $k$ base strings equal to $z$.  A partial assignment of size
less than $\min\{H,\Cb{0}(F,z),\Cb{1}(F,z)\}$ leaves some cell $c\in\bn^k$ untouched.
For each $t$, its restriction to $u^{(t)}$ has size below both co-certificate complexities and
therefore has a completion with value $c_t$.  Completeness of $\varphi_{c_t}$ supplies a
descriptor for that completion; placing these descriptors in untouched cell $c$ produces a
$1$-completion.  This proves the lower bound on $\C_0(g)$.

For a $1$-input, read the $k\ell'$ bits of the uniquely addressed cell and the at most $kw$
base-input bits selected by its $k$ accepted descriptors.  Persistence keeps those same
descriptors accepted under every completion, so these bits form a $1$-certificate.
The resulting family is unambiguous: soundness fixes the address to the unique string
$(F(u^{(1)}),\dots,F(u^{(k)}))$, the cell contents then fix each set
$I_{s_t}(y_s^{(t)})$, and the base input fixes the values read on those sets.  Hence exactly one
term accepts each $1$-input and $\UC_1(g)\le k(\ell'+w)$.

It remains to prove the degree bound.  For each cell $c\in\bn^k$, let $g_c$ indicate that all
its descriptors are accepted with asserted output string $c$.  Thus $g_c$ is the
\textsc{And} of the $k$ predicates
$\varphi_{c_t}(u^{(t)},y_c^{(t)})$.  Each conjunct has approximate degree at most $d$, and
an \textsc{And} of $k=\log H=O(\log m)$ of them has approximate degree $\Ot(d)$ by
Lemma~\ref{lem:robust}(2).  By Lemma~\ref{lem:robust}(1), boost each approximant to error
$1/(3H)$ and range $[0,1]$, at an additional degree factor
$O(\log H)=O(\log m)$.  Finally $g=\bigvee_c g_c$, and \emph{at most one} exact indicator
$g_c$ is $1$ on any input: if both $g_c$ and $g_{c'}$ accept, soundness of the stored
descriptors forces the same output string to equal both $c$ and $c'$.  Hence the sum of the
approximants for the $g_c$ approximates $g$ with error
$H\cdot\tfrac1{3H}=\tfrac13$, without an additional \textsc{Or} penalty.  Therefore
$\adeg(g)\le\Ot(d)$.
\end{proof}


\end{document}